\definecolor{darkgreen}{rgb}{0.0,0.7,0.0}
\definecolor{darkred}{rgb}{0.75,0.0,0.0}
\newcommand{\R}{\mathbb{R}}
\DeclareMathOperator{\var}{Var}
\newcommand{\XX}{\mathcal{X}}
\theoremstyle{plain}
\newtheorem{thm}{Theorem}[section]
\newtheorem{lem}[thm]{Lemma}
\theoremstyle{definition}
\newtheorem{defn}{Definition}[section]
\newtheorem{rem}{Remark}[section]
\begin{document}

\begin{frontmatter}

\title{Systematic evaluation of the population-level effects of alternative treatment strategies on the basic reproduction number}

\author[DG]{Dmitry Gromov\corref{corrauth}}
\cortext[corrauth]{Corresponding author}
\ead{dv.gromov@gmail.com}

\author[IB1,IB2]{Ingo Bulla}\ead{ingobulla@gmail.com}
\author[ERS]{Ethan O. Romero-Severson}\ead{eoromero@lanl.gov}

\address[DG]{Faculty of Applied Mathematics and Control Processes, Saint Petersburg State University,\\ St. Petersburg, Russia}
\address[IB1]{Institut f\"ur Mathematik und Informatik, Universit\"at Greifswald, Walther-Rathenau-Stra\ss e 47, 17487 Greifswald, Germany}
\address[IB2]{Universit\'e Perpignan Via Domitia, IHPE UMR 5244, CNRS, F-66860 Perpignan, France}
\address[ERS]{Theoretical Biology and Biophysics Group, Los Alamos National Laboratory, \\Los Alamos, New Mexico, USA}

\begin{abstract}

An approach to estimate the influence of the treatment-type controls on the basic reproduction number, $R_0$, is proposed and elaborated. The presented approach allows one to estimate the effect of a given treatment strategy or to compare a number of different treatment strategies on the basic reproduction number. All our results are valid for sufficiently small values of the control. However, in many cases it is possible to extend this analysis to larger values of the control as was illustrated by examples. 
\end{abstract}

\begin{keyword}
Disease control\sep Basic reproduction number \sep Treatment \sep Compartmental epidemic model \sep Next generation matrix
\end{keyword}

\end{frontmatter}


\section{Introduction}
The basic reproduction number ($R_0$), i.e. the number of infections generated by an infected person in a fully susceptible population, is a key determinant of the dynamics of an infectious disease. 
Interpretation of $R_0$ is complex and is covered in a number of excellent papers, see, e.g., \cite{Driessche:02,Heesterbeek:02,Heffernan:05} as well as the books \cite{Diekmann:00,Diekmann:12}. 
For the purposes of this paper, we note that $R_0$ can be interpreted as a threshold parameter: if $R_0<1$ then the disease can not sustain itself in the population and will eventually die out.
That is, if our goal is to push a disease towards extinction, then we can categorize a population-level intervention as `effective' if it reduces $R_0$ from where it was before the intervention was enacted.

\paragraph{Computation of $R_0$}  
$R_0$ can be computed or estimated from partially empirical considerations, for instance the ``survival function''-based approach \cite{Heesterbeek:96} that assumes knowledge of both the survival probability $F(t)$ and the infectivity $b(t)$ of an individual as functions of time. In addition to being difficult to implement, such approaches can hardly be used for computing $R_0$ in the case where there are more than one group of infected. This issue was addressed in \cite{Diekmann:90}, where an elegant approach to computing $R_0$ was proposed, termed the {\em next generation matrix} (NGM) method. This method boils down to computing $R_0$ as the spectral radius of a specially constructed matrix. Later, this method was detailed and substantiated in a number of papers, see, e.g., \cite{Driessche:02} and \cite{Castillo:02}. The power of the NGM approach lies in its universality. It can be applied to any population balance model as long as it satisfies a number of very natural assumptions. The result is based on certain properties of positive and inverse positive matrices, in particular M-matrices \cite{BP:94,Horn:91}. 

\paragraph{Contribution}
We aim at developing this approach in that we extend the notion of the basic reproduction number to a class of controlled disease propagation models. Much of applied theoretical epidemiology focuses on how  treatment policies, intervention designs, and novel treatments will impact the burden of a given disease by considering alternative scenarios. 
There have been a large number of papers aimed at evaluating the efficacy of different treatment schemes using various methods, from numerical simulation to optimal control. We mention \cite{Castillo:97,Bowden:00,Korenromp:00,blower_forecasting_2003,Baggaley:05,Sharomi:08, Martin:13,SilvaTorres:15,MMB:17} for a short list of related research. Modeling efforts have even attempted to measure the effects of control programs in historical epidemics \cite{bootsma_effect_2007}. However, most modeling efforts do not attempt to measure intervention effects systematically, i.e. by considering the joint effects of model parameters on the relative efficacy of alternative interventions. In this paper we address treatment programs, i.e., the intervention strategies that result in moving infected individuals either into a different group of infected (with different biological or behavioral characteristics) or into a group of not-infected, which can correspond to susceptible, recovered or any other group that consists of not contagious individuals.

When applied to one or several groups of infected individuals, the action of a treatment strategy can be described by a parameter $u$ that corresponds to a fraction of potentially eligible individuals that are administered the treatment during a unit of time. Typically, this fraction is rather small. Our first result consists in defining the notion of a controlled reproduction number $R_0(u)$ which explicitly contains $u$ as a parameter. The next result allows to estimate the action of the particular treatment strategy on the value of the controlled reproduction number $R_0(u)$. We formulate conditions under which the application of a treatment strategy leads to the reduction in $R_0(u)$. This allows for evaluating the efficacy of a devised treatment strategy as well as for comparing the efficacy of different strategies. The obtained results are illustrated by a number of examples. 

The paper is organized as follows: Section \ref{sec:dynamics} describes a general compartmental epidemic model and briefly introduces the NGM method for computing the basic reproduction number $R_0$. The main result of the paper along with a discussion on further extensions and ramifications of the developed approach are presented in Sec.~\ref{sec:R1}. The obtained results are illustrated by a number of examples in Sec.~\ref{sec:exa}. The paper is concluded with a discussion section. Finally, there are two appendices containing necessary technical information.

\section{Epidemics dynamics}\label{sec:dynamics}
\subsection{A disease propagation model}

When modeling the process of disease propagation in a heterogeneous population, the standard procedure consists in dividing the total population into a number of classes (compartments) according to some criteria relevant to the disease transmission: disease status, ability to contract disease, behavioral (contact) pattern and so on. All individuals within a compartment are assumed to be identical in their evolution. Thus we can consider only the number of individuals within each compartment, denoted by $x_i$, where $i=1,\ldots,n$ is the number of the compartment. 

\paragraph{General formulation} 
The dynamics of these groups include transitions between groups and the in- and outflows associated with these groups. In the following it is assumed that the total inflow is constant over time and does not depend on the population size while the outflow depends on both the size and the structure of the population. The evolution of the $i$th state is thus described by the following compartmental DE:
\begin{equation}\label{eq:compart}
\dot{x}_i=\Phi_i(x)=w_i+\sum\limits_{i\neq j}\big(a_{ij}(x)-a_{ji}(x)\big) - a_i x_i,
\end{equation} 
where $x\in \R^n_{\ge 0}$ is the state, $w_i$ is the constant inflow, $a_i x_i$ is the outflow from the $i$-th compartment, and $a_{ij}(x)$, $i\neq j$, is the flow rate from the $j$-th to the $i$-th compartment.

The flow rate functions are assumed to be $C^\infty$ for all $x\in \R^n_{\ge 0}$. Furthermore, all flow rates have to satisfy the following properties:
\begin{subequations}\label{eq:prop}
\begin{align}
&w_i\ge 0, a_i\ge 0, a_{ji}(x)\ge 0, &&\forall x\in \R^n_{\ge 0}, i,j=1,\dots,n.\label{eq:prop1}\\
&x_i=0 \implies a_{ji}(x)=0 &&\forall i,j=1,\dots,n.\label{eq:prop3}
\end{align}
\end{subequations}
The property (\ref{eq:prop3}) implies that there is no outflow from an empty compartment.

\paragraph{Compartmental epidemic model}
All state variables are divided into two groups depending on whether the respective compartment corresponds to the infectious individuals (that is those able to transmit infection) or to the non-infectious individuals (regardless of whether they were infected earlier or not).

Let there be $l$ state variables representing the ``infectious'' compartments. We rearrange the states in the way that the vector of the state variables takes the form $x^\top=[x^\top_I, x_N^\top]$, where $x^\top_I=[x_1,\dots,x_l]$ and $x^\top_N=[x_{l+1},\dots,x_n]$ are the states associated with ``infectious'' and ``non-infectious'' compartments. We will refer to the respective compartments as the $I$- and $N$-compartments. Further, we write $\Phi(x)$ as the sum of two vector-valued functions
\begin{equation}\label{eq:Phi-split}\Phi(x)=\Phi^I(x) + \left[w +\Phi^{in}(x) - \Phi^{out}(x)\right]=\Phi^I(x) + \Phi^C(x),\end{equation}
where $\Phi_i^{out}(x)=\sum\limits_{j\neq i}a_{ji}(x)+a_ix_i$, and
\begin{equation*}
\Phi^I(x)=\begin{bmatrix}\bigg(\sum\limits_{j>l}a_{ij}(x)\bigg)_{i=1,\dots,l}\\[20pt]{\mathbf 0}\end{bmatrix},\enskip 
\Phi^{in}(x)=\begin{bmatrix}\bigg(\sum\limits_{\substack{j\le l\\j\neq i}}a_{ij}(x)\bigg)_{i=1,\dots,l}\quad\\[10pt] \bigg(\sum\limits_{j\neq i}a_{ij}(x)\bigg)_{i=l+1,\dots,n}\end{bmatrix}.\end{equation*} 
Here, $\Phi_i^I$ is the rate at which new infections occur in the $i$-th compartment (new infection means that the flow is to be from an $N$-compartment), $\Phi^{in}(x)$ is the vector of inter-compartmental inflow rates not related to new infections, and $\Phi^{out}(x)$ are the outflow rates from the respective compartments. Thus, $\Phi^{C}(x)$ represents the totality of all flows not related to new infections. Note that $\Phi^I_i(x)=0$ for all $i>l$ as there is no inflow of infected into the $N$-compartments.

The components of the respective vectors $\Phi^{(\cdot)}(x)$ satisfy the following properties (note that (\ref{eq:prop-Phi1}) and (\ref{eq:prop-Phi2}) can be derived from (\ref{eq:prop})):
\begin{subequations}\label{eq:prop-Phi}
\begin{align}
&\Phi_i^I(x), \Phi_i^{in}(x), \Phi_i^{out}(x)\ge 0,&&\forall i=1,\dots,n\label{eq:prop-Phi1}\\
&\{x_i=0\} \implies \{\Phi_i^{out}(x)=0\},&& \forall i=1,\dots,n\label{eq:prop-Phi2}\\
&\{x_I=\mathbf{0}\} \implies \{\Phi_i^I(x)=0, \Phi_i^{in}(x)=0\},&& \forall i=1,\dots,l. \label{eq:prop-Phi3}
\end{align}\end{subequations}
 The latter property implies that no new infections occur in a totally healthy population.\smallskip

\paragraph{Assumptions}
We conclude this section by making a number of epidemiologically motivated assumptions.
\begin{itemize}
\item[\bf A1. ] There is no steady inflow into the $I$-compartments, i.e., $w_i=0$, $i=1,\dots,l$.
\item[\bf A2. ] There exists a positive constant $\mu$ such that $a_i\ge \mu>0$ for all $i=1,\dots,n$. We refer to $\mu$ as the {\em baseline mortality rate}.
\end{itemize}

\subsection{Stability of a disease-free equilibrium}

Let $\XX_{DF}\subset \R^n_{\ge 0}$ be the set of disease-free states, i.e., $\XX_{DF}=\{x\in \R^n_{\ge 0}| x_i=0, i=1,\dots,l\}$. 

\begin{defn}
Let $x^*$ be an equilibrium state, i.e., $\Phi(x^*)=0$. Then $x^*$ is said to be a {\em disease-free equilibrium} (DFE) if $x^*\in \XX_{DF}$. Otherwise, $x^*$ is referred to as an {\em endemic infection equilibrium}.
\end{defn}

The system (\ref{eq:compart}) is locally stable at a DFE $x^*$ if the linearized model $\dot{\tilde x}=A\tilde x$ is asymptotically stable. The latter holds if the spectrum of the structure matrix $A$ contains only the eigenvalues with negative real part, \cite{Coddington:55}. Such matrices are said to be {\em Hurwitz}.

\begin{lem}[\cite{Driessche:02}]\label{lem:linearize-DFE} Let $x^*$ be a DFE. The Jacobian matrices $A^I=D\Phi^I(x)\big|_{x=x^*}$ and $A^C=D\Phi^C(x)\big|_{x=x^*}$ have the following form:
$$A^I=\begin{bmatrix}A^I_{11}&0\\[10pt]0&0\end{bmatrix},\quad 
A^C=\begin{bmatrix}A^C_{11}&0\\[10pt]A^C_{21}&A^C_{22}\end{bmatrix},$$
where $A^I_{11}$ is an $[l\times l]$ matrix with non-negative elements, $A^I_{11}\succeq 0$ and $A^C_{11}$ is an $[l\times l]$ matrix with non-positive diagonal elements and non-negative off-diagonal elements\footnote{See Appendix B for the explanation of the used notation.}. 
\end{lem}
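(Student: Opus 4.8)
\emph{Plan.} The plan is to obtain all four assertions from two elementary observations about $C^\infty$ functions on $\R^n_{\ge 0}$, combined with the structural properties (\ref{eq:prop3}), (\ref{eq:prop-Phi1}), (\ref{eq:prop-Phi3}) and Assumption~A1. \emph{Observation (i):} if $f\in C^\infty(\R^n_{\ge 0})$ is nonnegative with $f(x^*)=0$ and $x^*_m=0$, then $\partial f/\partial x_m(x^*)\ge 0$, since $t\mapsto f(x^*+te_m)$ is nonnegative, $C^1$, and zero at $t=0$, hence has nonnegative right derivative there. \emph{Observation (ii):} if $f$ is identically zero along the segment $\{x^*+te_m:0\le t<\varepsilon\}$, then $\partial f/\partial x_m(x^*)=0$. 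The one useful structural remark is that, by (\ref{eq:prop3}), each flow $a_{jk}$ vanishes on the face $\{x_k=0\}$; hence if its source index satisfies $k\le l$ and $k\ne m$, then $a_{jk}$ is identically zero along $x^*+te_m$, because $x^*_k=0$ at a DFE and that coordinate stays at $0$. The proof then amounts to fixing each entry $(i,m)$, writing the corresponding component of $\Phi^I$ or $\Phi^C$ as a sum of flow terms (using Assumption~A1 to drop $w_i$ for $i\le l$, so that $\Phi^C_i=\Phi^{in}_i-\Phi^{out}_i$), and classifying each term as ``identically zero along $e_m$'' (contributes $0$ via (ii)) or ``globally nonnegative with a zero at $x^*$'' (contributes a definite sign via (i)).

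\emph{The matrix $A^I$.} Since $\Phi^I_i\equiv 0$ for $i>l$, its last $n-l$ rows vanish, which accounts for the bottom blocks. For $i\le l<m$ the segment $x^*+te_m$ stays in $\XX_{DF}$, on which $\Phi^I_i$ vanishes by (\ref{eq:prop-Phi3}), so (ii) gives the zero top-right block. For $i,m\le l$, $\Phi^I_i$ is nonnegative with a zero at $x^*\in\XX_{DF}$ and $x^*_m=0$, so (i) gives $\partial\Phi^I_i/\partial x_m(x^*)\ge 0$, i.e. $A^I_{11}\succeq 0$.

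\emph{The matrix $A^C$.} Here $\Phi^{in}_i(x)=\sum_{k\le l,\,k\ne i}a_{ik}(x)$ and $\Phi^{out}_i(x)=\sum_{k\ne i}a_{ki}(x)+a_ix_i$ for $i\le l$. For $m>l$: every term of $\Phi^{in}_i$ has source $\le l\ne m$, every $a_{ki}$ in $\Phi^{out}_i$ has source $i\ne m$, and $a_ix_i$ has $i\ne m$, so all vanish along $e_m$ and the top-right block is zero. For $i\ne m$ with $i,m\le l$: in $\Phi^{in}_i$ only $a_{im}$ survives differentiation, and it is nonnegative with $a_{im}(x^*)=0$ (by (\ref{eq:prop3}) and $x^*_m=0$), so (i) makes its derivative $\ge 0$; every term of $\Phi^{out}_i$ has source $i\ne m$ and is zero along $e_m$; hence the off-diagonal of $A^C_{11}$ is $\ge 0$. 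For $i=m\le l$: every term of $\Phi^{in}_i$ has source $k\le l$, $k\ne i$, hence is zero along $e_i$ and contributes $0$; each $a_{ki}$ in $\Phi^{out}_i$ is nonnegative with $a_{ki}(x^*)=0$, so (i) gives $\partial a_{ki}/\partial x_i(x^*)\ge 0$, and $\partial(a_ix_i)/\partial x_i=a_i\ge 0$; therefore $\partial\Phi^C_i/\partial x_i(x^*)=-\sum_{k\ne i}\partial a_{ki}/\partial x_i(x^*)-a_i\le 0$, the diagonal sign of $A^C_{11}$. Nothing is claimed for the rows $i>l$, so $A^C_{21}$ and $A^C_{22}$ require no argument.

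\emph{Main obstacle.} There is no conceptual difficulty — the content is purely the bookkeeping of individual flow terms against coordinate rays emanating from the DFE. The one point that needs care is the diagonal of $A^C_{11}$: the crude estimate ``$\Phi^{in}_i\ge 0\Rightarrow\partial\Phi^{in}_i/\partial x_i(x^*)\ge 0$'' points in the wrong direction, so one must expand $\Phi^{in}_i$ into its flows $a_{ik}$ and use that each of their source compartments ($k\le l$) is empty at a DFE to pin down $\partial\Phi^{in}_i/\partial x_i(x^*)=0$ exactly.
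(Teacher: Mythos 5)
Your proof is correct and complete, including the one delicate point you flag (the diagonal of $A^C_{11}$, where $\Phi^{in}_i$ must be expanded into individual flows rather than estimated crudely). The paper itself does not prove this lemma — it is imported from \cite{Driessche:02} — but the closest in-paper argument, the Appendix~A proof of the controlled analogue Lemma~\ref{lem:linearize-u}, proceeds exactly as you do: one-sided difference quotients of the individual flow rates along coordinate rays emanating from the DFE, with each term either identically zero on the ray (by (\ref{eq:prop3}) together with $x^*_I=\mathbf{0}$) or globally nonnegative with a zero at $x^*$, which pins down the block and sign structure; so your route matches the paper's own technique.
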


At this point, we make one more assumption regarding the behavior of the system in the absence of the infection.

\begin{itemize}
\item[\bf A3. ] $A^C$ is {\em Hurwitz}.
\end{itemize}
This assumption effectively implies that the DFE is asymptotically stable provided the virus or whatever source of the infection has lost its contagiousness hence, no new infections occur. 

Since the block matrix $A^C_{22}$ is stable at a DFE, stability of the whole system is determined by the sum $A^I_{11}+A^C_{11}$. Following the common convention we write $F=A^I_{11}$ and $V=-A^C_{11}$. Hence, the stability of a DFE is determined by the stability of the matrix $F-V$.

\begin{thm}[\cite{Driessche:02}]\label{thm:rho}
The matrix $F-V$ is Hurwitz if and only if $\rho(FV^{-1})<1$.
\end{thm}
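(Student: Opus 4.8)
The plan is to reduce the claim to two standard facts: a Perron--Frobenius statement for Metzler (essentially nonnegative) matrices, and the characterization of nonsingular M-matrices. First I would extract the structure we already have. By Assumption~A3 and Lemma~\ref{lem:linearize-DFE}, $A^C$ is Hurwitz and block lower triangular, so its leading diagonal block $A^C_{11} = -V$ is Hurwitz as well; since $-V$ has nonnegative off-diagonal entries it is Metzler, and a Hurwitz Metzler matrix has a nonsingular M-matrix as its negative. Hence $V$ is a nonsingular M-matrix, so $V$ is invertible and $V^{-1} \ge 0$ entrywise (see \cite{BP:94,Horn:91}). Consequently $FV^{-1} \ge 0$ and $V^{-1}F \ge 0$, so by Perron--Frobenius each has its spectral radius as an actual eigenvalue, with $\rho(V^{-1}F) = \rho(FV^{-1})$ since $\rho(AB) = \rho(BA)$.

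Next I would set up a homotopy $M(t) = tF - V$ for $t \in [0,1]$, noting $M(0) = -V$ is Hurwitz while $M(1) = F - V$ is the matrix of interest. Because $F \ge 0$ and $-V$ has nonnegative off-diagonal entries, every $M(t)$ is Metzler, so its spectral abscissa $s(M(t)) = \max\{\mathrm{Re}\,\lambda : \lambda \in \sigma(M(t))\}$ is a real eigenvalue of $M(t)$; moreover $t \mapsto s(M(t))$ is continuous (eigenvalues vary continuously with the entries) and nondecreasing (adding the nonnegative matrix $tF$ can only move the dominant eigenvalue of a Metzler matrix rightward).

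The bridge between the two sides is the factorization $M(t) = -V(I - tV^{-1}F)$, which gives $\det M(t) = \det(-V)\,\det(I - tV^{-1}F)$; since $\det V \neq 0$, the matrix $M(t)$ is singular precisely when $1/t$ is an eigenvalue of $V^{-1}F$. I would then argue the two implications. If $\rho(FV^{-1}) < 1$, then for every $t \in (0,1]$ we have $1/t \ge 1 > \rho(V^{-1}F)$, so the real number $1/t$ cannot be an eigenvalue of $V^{-1}F$, and $M(0) = -V$ is nonsingular too; hence $0 \notin \sigma(M(t))$ for all $t \in [0,1]$, so the continuous function $s(M(t))$ cannot reach $0$, and since $s(M(0)) < 0$ we conclude $s(F-V) = s(M(1)) < 0$, i.e.\ $F-V$ is Hurwitz. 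Conversely, if $\rho(FV^{-1}) \ge 1$, put $t^* = 1/\rho(FV^{-1}) \in (0,1]$; then $1/t^* = \rho(V^{-1}F)$ is an eigenvalue of $V^{-1}F$, so $M(t^*)$ is singular, whence $s(M(t^*)) \ge 0$, and monotonicity yields $s(F-V) = s(M(1)) \ge s(M(t^*)) \ge 0$, so $F-V$ is not Hurwitz.

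I expect the main obstacle to be the supporting infrastructure rather than the homotopy argument: pinning down, from Lemma~\ref{lem:linearize-DFE} and A3, that $V$ is a nonsingular M-matrix and hence $V^{-1} \ge 0$, and invoking the correct Perron--Frobenius properties of Metzler matrices (spectral abscissa attained at a real eigenvalue, monotonicity under nonnegative perturbation). If one prefers to avoid the homotopy, the same facts give a short alternative in M-matrix language: $V - F$ is a Z-matrix admitting the regular splitting $V-F$ with $V^{-1} \ge 0$ and $F \ge 0$, and such a matrix is a nonsingular M-matrix if and only if $\rho(V^{-1}F) < 1$; since a Metzler matrix is Hurwitz exactly when its negative is a nonsingular M-matrix, this is equivalent to $F - V$ being Hurwitz, and $\rho(V^{-1}F) = \rho(FV^{-1})$ finishes the argument.
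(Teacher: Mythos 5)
Your proof is correct, and it is worth noting that the paper itself offers no proof of this statement at all: Theorem~\ref{thm:rho} is imported verbatim from \cite{Driessche:02}, so there is nothing in the text to compare line by line. Your supporting infrastructure is exactly right and is fully backed by the paper's own apparatus: $A^C$ Hurwitz and block lower triangular (A3 plus Lemma~\ref{lem:linearize-DFE}) gives $A^C_{11}=-V$ Hurwitz, $V$ is a Z-matrix with positive stability, hence a nonsingular M-matrix with $V^{-1}\succeq 0$ by Theorems~\ref{thm:inv-pos} and \ref{thm:M-charact}. Your main route --- the homotopy $M(t)=tF-V$, using that the spectral abscissa of a Metzler matrix is a real eigenvalue, varies continuously, and is monotone under nonnegative perturbation, combined with the determinant factorization $\det M(t)=\det(-V)\det(I-tV^{-1}F)$ --- is a genuinely different argument from the one in the cited source. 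Van den Driessche and Watmough argue via the regular splitting $V-F$ and the equivalence, for Z-matrices, of inverse-positivity, positive stability, and $\rho(V^{-1}F)<1$; that is precisely the ``short alternative'' you sketch in your last paragraph. The regular-splitting proof is more economical and stays entirely inside M-matrix theory, which is the language the paper's Appendix B sets up; your homotopy proof is longer but arguably more transparent about \emph{why} the threshold sits at $\rho=1$ (the dominant eigenvalue of $tF-V$ crosses zero exactly when $1/t$ hits the spectrum of $V^{-1}F$), and it generalizes readily to the perturbed matrices $F-V-\sum_k W_k u_k$ of Theorem~\ref{thm:rho-u}. Both arguments are sound; the only point to police in yours is the monotonicity claim for the spectral abscissa, which you correctly reduce to monotonicity of the spectral radius on nonnegative matrices via the shift $M(t)+cI$.
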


We conclude this section with the following two definitions:

\begin{defn}
The matrix $FV^{-1}$ is called the {\em the next generation matrix}.
\end{defn}

\begin{defn}\label{def:R0}
The parameter $R_0=\rho(FV^{-1})$ is called the {\em basic reproduction number}.
\end{defn}

In Def.\ \ref{def:R0}, $\rho(\cdot)$ denotes the spectral radius of a matrix. 
\section{Evaluating the Effect of a Treatment-type Control}\label{sec:R1}

\subsection{A controlled disease propagation model}
We wish to model the effect of treatment-type controls, i.e., the controls that result in moving individuals from the infected compartments to other (both infected and healthy) compartments. In doing so, we restrict our attention to the cases where the control action enters the equations linearly. In particular, this implies that there is no interference between different treatment strategies, i.e., the rate at which people are administered to treatment $i$ does not depend on the respective rate associated with treatment $j$ for all pairs of $i$ and $j$.

With the above assumptions in mind, we write the controlled population balance model in the following form:
\begin{equation}\label{eq:compart-u}\dot{x}=\Phi(x)+\Phi^u(x) u,\end{equation}
where $u\in \R^m_{\ge 0}$ is the vector of non-negative controls and the components of the control matrix $\Phi^u(x)=\left[\Phi_{ik}^u(x)\right]$, $i=1,\dots,n$, $k=1,\dots, m$, have the following structure:
$$\Phi^{u}_{\cdot,k}(x)=\begin{bmatrix}
\left(\sum\limits_{\substack{j\le l\\j\neq i}}a^u_{ij,k}(x)-\sum\limits_{j\neq i}a^u_{ji,k}(x)\right)_{i=1,\dots,l}\\
\left(\sum\limits_{j\le l}a^u_{ij,k}(x)\right)_{i=l+1,\dots,n}\end{bmatrix}.$$
The control matrix describes the controlled flows from the infected compartments to both infected and healthy compartments. This may correspond to isolating infected individuals (corresponds to a transition from an $I$-compartment  to another $I$-compartment) or treating an infected individual (a flow from an $I$-compartment to an $N$-compartment). 

The respective controlled flow rates $a^u_{ij,k}(x)$ satisfy the conditions similar to those for the flow rates of the original (uncontrolled) model:
\begin{subequations}\label{eq:prop-u}
\begin{align}
&a^u_{ji,k}(x)\ge 0, &&\forall x\in \R^n_{\ge 0}, i,j=1,\dots,n,\,i\neq j,\, k=1,\dots,m.\label{eq:prop-u1}\\
&\{x_i=0\} \implies \{a^u_{ji,k}(x)=0\} &&\forall  i,j=1,\dots,n,\,k=1,\dots,m.\label{eq:prop-u2}
\end{align}
\end{subequations}

\paragraph{Controllability}
Note that at a DFE $x^*$, the control matrix vanishes identically as there are no outflows from I-compartments: $\Phi^{u}(x^*)=0_{[n\times m]}$. This implies that at a DFE $x^*$, the linearized model is effectively uncontrollable. Thus, one cannot use the linearized model to design a feedback control law stabilizing the DFE.

\subsection{Analysis of the controlled system}\label{sec:const-control}

Due to the difficulties outlined above, we consider a somewhat simpler, but even more practically relevant situation. Namely, we wish to study the system's dynamics when a constant control $u^*\in \R^m_{\ge 0}$ is applied. We first formulate the result that extends Lemma \ref{lem:linearize-DFE} to include the controls $u$.
\begin{lem}\label{lem:linearize-u}
Let $u(t)=u^*\in \R^m_{\ge 0}$ for all $t\ge 0$. The linearized model of (\ref{eq:compart-u}) at a DFE $x^*$ has the form
\begin{equation}\label{eq:linear-sys}\dot{x}=\left(A^I+A^C+\sum_{k=1}^m B_k u^*_k\right)x,\end{equation}
where $A^I$ and $A^C$ are defined in Lemma \ref{lem:linearize-DFE} and $B_k$, $k=1,\dots,m$ are
$$B_k=\frac{D\Phi_{\cdot,k}^{u}(x)}{Dx}\bigg|_{x=x^*}= \begin{bmatrix} B_{11,k}&0\\[5pt]B_{21,k}&0\end{bmatrix}.$$
The matrices $B_{11,k}$ have non-positive diagonal and non-negative off-diagonal elements. Furthermore, $B_{11,k}$ are weakly column diagonally dominant.
\end{lem}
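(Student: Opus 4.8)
The plan is to linearize (\ref{eq:compart-u}) term by term and then read off the structure of each $B_k$ from property (\ref{eq:prop-u2}). Since $u(t)\equiv u^*$ is constant, the right-hand side of (\ref{eq:compart-u}) is the vector field $x\mapsto\Phi(x)+\sum_{k=1}^m\Phi^{u}_{\cdot,k}(x)\,u^*_k$, whose Jacobian at the DFE $x^*$ equals $D\Phi(x^*)+\sum_{k=1}^m u^*_k\,D\Phi^{u}_{\cdot,k}(x^*)$ because the $u^*_k$ are constants. Using the splitting $\Phi=\Phi^I+\Phi^C$ together with Lemma~\ref{lem:linearize-DFE} one has $D\Phi(x^*)=A^I+A^C$, which yields (\ref{eq:linear-sys}) after setting $B_k:=D\Phi^{u}_{\cdot,k}(x^*)$. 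It then remains to analyse the blocks of $B_k$.

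The key observation is that every controlled flow rate $a^u_{ij,k}$ appearing in $\Phi^{u}_{\cdot,k}$ has its \emph{source} index $j$ in $\{1,\dots,l\}$ (treatment moves individuals out of infectious compartments), while at a DFE precisely these coordinates vanish, $x^*_j=0$ for $j\le l$. By (\ref{eq:prop-u2}) each such $a^u_{ij,k}$ vanishes identically on the hyperplane $\{x_j=0\}$; since it is $C^\infty$ on $\R^n_{\ge0}$, one may either invoke Hadamard's lemma to write $a^u_{ij,k}(x)=x_j\,g_{ij,k}(x)$ with $g_{ij,k}$ smooth, or argue directly with difference quotients, to obtain at $x^*$
$$\frac{\partial a^u_{ij,k}}{\partial x_p}(x^*)=\delta_{jp}\,g_{ij,k}(x^*),\qquad g_{ij,k}(x^*):=\frac{\partial a^u_{ij,k}}{\partial x_j}(x^*)\ge 0,$$
the non-negativity following because $a^u_{ij,k}\ge0$ on the orthant and vanishes on $\{x_j=0\}$. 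Hence only the columns $p\le l$ of $B_k$ can be non-zero, giving the claimed block form $B_k=\begin{bmatrix}B_{11,k}&0\\ B_{21,k}&0\end{bmatrix}$ with $B_{11,k}=\big[\,\partial\Phi^{u}_{i,k}/\partial x_p(x^*)\,\big]_{i,p=1,\dots,l}$.

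Substituting the displayed identity into the explicit formula for $\Phi^{u}_{i,k}$, $i\le l$, finishes the proof. On the diagonal ($p=i$) only the outflow terms $-\sum_{j\ne i}a^u_{ji,k}$ contribute, their source index being $i$, so $(B_{11,k})_{ii}=-\sum_{j\ne i}g_{ji,k}(x^*)\le0$; for $p\ne i$, $p\le l$, only the inter-infectious inflow $\sum_{j\le l,\,j\ne i}a^u_{ij,k}$ contributes its single $j=p$ term, so $(B_{11,k})_{ip}=g_{ip,k}(x^*)\ge0$, which is the asserted sign pattern. Weak column diagonal dominance is then immediate from the same two formulas: for every column $p\le l$,
$$-(B_{11,k})_{pp}=\sum_{\substack{1\le j\le n\\ j\ne p}}g_{jp,k}(x^*)\ \ge\ \sum_{\substack{1\le i\le l\\ i\ne p}}g_{ip,k}(x^*)=\sum_{\substack{1\le i\le l\\ i\ne p}}\big|(B_{11,k})_{ip}\big|,$$
since the discarded terms with $j>l$ are non-negative; in fact the column defect equals the total rate at which control $k$ moves individuals from infectious compartment $p$ into the $N$-compartments.

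The routine part is the index bookkeeping in the two sums defining $\Phi^{u}_{\cdot,k}$. The one genuine subtlety, and the step I would be most careful about, is the differentiation/division argument: deducing $\partial a^u_{ij,k}/\partial x_p(x^*)=0$ for $p\ne j$ and $\ge 0$ for $p=j$ from (\ref{eq:prop-u2}) and non-negativity alone, bearing in mind that $a^u_{ij,k}$ is only assumed $C^\infty$ on the closed orthant $\R^n_{\ge0}$, so that at $x^*$ only the inward derivative in the $x_j$-direction is available. Everything else is direct substitution.
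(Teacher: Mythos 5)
Your proof is correct and follows essentially the same route as the paper's: both derive the sign pattern of $B_{11,k}$ from property (\ref{eq:prop-u2}) by evaluating the partial derivatives of $\Phi^u_{i,k}$ at the DFE (you via the factorization $a^u_{ij,k}=x_j g_{ij,k}$, the paper via one-sided difference quotients, which amounts to the same thing), and both obtain weak column diagonal dominance from the fact that each column of $D\Phi^u_{\cdot,k}(x^*)$ sums to zero while the discarded rows $i>l$ contribute non-negative entries. Your explicit handling of the boundary-derivative subtlety and the identification of the column defect with the outflow to the $N$-compartments are welcome refinements, not departures.
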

\begin{proof}
See Appendix A.
\end{proof}

For the sake of notational simplicity and to comply with the previously accepted notation, we will write $W_i=-B_{11,i}$. Matrices $W_i$ belong to the class of Z-matrices (see Appendix B for details). However, these are not M-matrices as  matrices $W_i$ are typically rank-deficient and hence non-invertible.

We are interested in the stability of the linearized system (\ref{eq:linear-sys}). As in the uncontrolled case, the structure matrix of (\ref{eq:linear-sys}) is a block-lower triangular matrix, whose eigenvalues coincide with the eigenvalues of the diagonal blocks. The lower right block is Hurwitz according to {\bf A3}. Thus, the stability of (\ref{eq:linear-sys}) is determined by the eigenvalues of the matrix $$J_u=F-V-\sum_{k=1}^m W_k u_k.$$

The following lemma gives an important result that will be used in the sequel. 
\begin{lem}
Let $V$ be a strictly column diagonally dominant Z-matrix with positive diagonal elements. Then for any $u\in \R^m_{\ge 0}$ the matrix $V+\sum_{k=1}^m W_k u_k$ is an M-matrix.
\end{lem}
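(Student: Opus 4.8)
The plan is to show that $M:=V+\sum_{k=1}^m W_k u_k$ is itself a Z-matrix with positive diagonal that is strictly column diagonally dominant, and then to invoke the standard characterization of nonsingular M-matrices (Appendix B). The first step is bookkeeping on the building blocks. By hypothesis $V$ is a Z-matrix with $V_{jj}>0$, and strict column diagonal dominance means $V_{jj}>\sum_{i\neq j}|V_{ij}|=-\sum_{i\neq j}V_{ij}$; equivalently, every column sum $\sum_i V_{ij}$ is strictly positive. By Lemma~\ref{lem:linearize-u}, each $B_{11,k}$ has non-positive diagonal and non-negative off-diagonal entries, so $W_k=-B_{11,k}$ is a Z-matrix with non-negative diagonal, and weak column diagonal dominance of $B_{11,k}$, namely $-(B_{11,k})_{jj}\ge\sum_{i\neq j}(B_{11,k})_{ij}$, says after negation that every column sum $\sum_i (W_k)_{ij}$ is non-negative.

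Next I would assemble $M$ and check the two required properties directly. Since every $u_k\ge 0$, the diagonal entries $M_{jj}=V_{jj}+\sum_k (W_k)_{jj}u_k$ are strictly positive (a positive term plus non-negative terms), while the off-diagonal entries $M_{ij}=V_{ij}+\sum_k (W_k)_{ij}u_k$ are non-positive (a non-negative combination of non-positive numbers); hence $M$ is a Z-matrix. Moreover each column sum of $M$ equals $\sum_i V_{ij}+\sum_k u_k\sum_i (W_k)_{ij}$, a strictly positive term plus non-negative terms, so it is strictly positive. Together with $M_{jj}>0$ and $M_{ij}\le 0$ for $i\neq j$, this is precisely strict column diagonal dominance of the Z-matrix $M$.

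Finally I would conclude that a strictly column diagonally dominant Z-matrix with positive diagonal is a nonsingular M-matrix. A quick route is to apply the Gershgorin circle theorem to $M^\top$: each disc is centered at $M_{jj}>0$ with radius $\sum_{i\neq j}|M_{ij}|<M_{jj}$, so it lies in the open right half-plane, whence every eigenvalue of $M$ has positive real part; a Z-matrix with this property is a nonsingular M-matrix (cf.\ Appendix B and \cite{BP:94}). I do not expect a genuine obstacle in this argument; the only point needing care is the translation of the weak/strict column-diagonal-dominance hypotheses into the statements about column sums and the verification that these signs and inequalities survive the non-negative linear combination defining $M$ — which is immediate once the sign patterns are pinned down.
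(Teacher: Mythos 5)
Your proposal is correct and follows essentially the same route as the paper's proof: verify that $V+\sum_k W_k u_k$ inherits the Z-matrix structure, positive diagonal, and strict column diagonal dominance from $V$ (using the weak column dominance of the $W_k$ from Lemma~\ref{lem:linearize-u}), then invoke the M-matrix characterization of Theorem~\ref{thm:M-charact}. Your extra care in restating dominance as positivity of column sums, and the Gershgorin justification of the final step, merely spell out what the paper leaves to the cited theorem.
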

\begin{proof}
According to Lemma \ref{lem:linearize-u}, matrices $W_i=-B_{11,i}$ have positive diagonal elements and are weakly (non-strictly) column diagonally dominant.  Thus for any positive $u$ the matrix $V+\sum_{k=1}^m W_k u_k$ is a Z-matrix and is strictly column diagonally dominant. Then it follows from Theorem \ref{thm:M-charact} that $V+\sum_{k=1}^m W_k u_k$ is a non-singular M-matrix.
\end{proof}

Now we are ready to formulate a generalized version of Theorem \ref{thm:rho}
\begin{thm}\label{thm:rho-u}
The matrix $F-V-\sum_{k=1}^m W_k u_k$ is Hurwitz if and only if $$\rho\left(F\left(V+\sum_{k=1}^m W_k u_k\right)^{-1}\right)<1.$$
\end{thm}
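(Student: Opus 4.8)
The plan is to reduce Theorem~\ref{thm:rho-u} to Theorem~\ref{thm:rho} by the substitution $V\mapsto \tilde V:=V+\sum_{k=1}^m W_k u_k$. Theorem~\ref{thm:rho} (from \cite{Driessche:02}) is, at bottom, a statement about an arbitrary pair $(F,\tilde V)$ in which $F\succeq 0$ and $\tilde V$ is a non-singular M-matrix: for every such pair, $F-\tilde V$ is Hurwitz if and only if $\rho(F\tilde V^{-1})<1$. Since $F-\tilde V=F-V-\sum_{k=1}^m W_k u_k=J_u$ and $F(V+\sum_k W_k u_k)^{-1}=F\tilde V^{-1}$, the theorem follows as soon as the two hypotheses $F\succeq 0$ and ``$\tilde V$ is a non-singular M-matrix'' are verified.

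The first is immediate: $F=A^I_{11}$ is unchanged from the uncontrolled model and $F\succeq 0$ by Lemma~\ref{lem:linearize-DFE}. The second is the only substantive point, and it is exactly the content of the lemma stated just above Theorem~\ref{thm:rho-u}. Indeed, under assumption {\bf A2} the matrix $V$ is a strictly column diagonally dominant Z-matrix with positive diagonal (the strict excess of each diagonal entry over the corresponding column sum of off-diagonal magnitudes is the rate of outflow from that infected compartment into the non-infectious compartments together with mortality, which is at least $\mu>0$), while by Lemma~\ref{lem:linearize-u} each $W_k=-B_{11,k}$ is a weakly column diagonally dominant Z-matrix with non-negative diagonal. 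Hence for $u\in\R^m_{\ge 0}$ the sum $\tilde V$ keeps the Z sign pattern, remains strictly column diagonally dominant, and is therefore a non-singular M-matrix by Theorem~\ref{thm:M-charact}; in particular $\tilde V^{-1}$ exists and is entrywise non-negative. Combining the two, Theorem~\ref{thm:rho} applied to the pair $(F,\tilde V)$ yields both implications simultaneously: $J_u=F-\tilde V$ is Hurwitz if and only if $\rho(F\tilde V^{-1})<1$, which is the assertion.

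Beyond bookkeeping I expect no real obstacle, precisely because the preceding lemma already does the work of keeping $\tilde V$ inside the class of non-singular M-matrices; the remaining point is merely that Theorem~\ref{thm:rho} is established in \cite{Driessche:02} at the level of generality we need. Should one prefer a self-contained derivation that does not quote Theorem~\ref{thm:rho}, the alternative is to argue directly: $\tilde V-F$ is a Z-matrix, and writing $\tilde V=sI-B$ with $B\succeq 0$ and $s$ exceeding every diagonal entry of $\tilde V$ gives $\tilde V-F=sI-(B+F)$ with $B+F\succeq 0$; then $F-\tilde V$ is Hurwitz $\iff$ the spectrum of $\tilde V-F$ lies in the open right half-plane $\iff$ $\tilde V-F$ is a non-singular M-matrix $\iff$ $s>\rho(B+F)$, and one checks by a Neumann-series and Perron--Frobenius argument, using $I-F\tilde V^{-1}=(\tilde V-F)\tilde V^{-1}$ together with $\tilde V^{-1}\succeq 0$ and $F\succeq 0$, that this last inequality is equivalent to $\rho(F\tilde V^{-1})<1$. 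In that version the single delicate step is precisely this equivalence, i.e.\ the classical relation between the M-matrix property of $\tilde V-F$ and the spectral radius of $F\tilde V^{-1}$.
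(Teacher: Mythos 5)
Your proof is correct and follows exactly the route the paper intends: Theorem~\ref{thm:rho-u} is stated without proof immediately after the lemma establishing that $V+\sum_{k=1}^m W_k u_k$ is a non-singular M-matrix, precisely so that the argument of Theorem~\ref{thm:rho} from \cite{Driessche:02} applies verbatim to the pair $\bigl(F,\,V+\sum_k W_k u_k\bigr)$. Your verification of the two hypotheses ($F\succeq 0$ and the perturbed $V$ lying in $\mathbf{M}$, so that their difference remains a Z-matrix) is the whole content, and your self-contained alternative via the regular-splitting and Perron--Frobenius argument is likewise sound.
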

We thus define the controlled reproduction number $R_0(u)$ as the spectral radius of the perturbed matrix $Q(u)=F\left(V+\sum_{k=1}^m W_k u_k\right)^{-1}$:
\begin{equation}\label{eq:R_0u}R_0(u)=\rho\left(F\left(V+\sum_{k=1}^m W_k u_k\right)^{-1}\right).\end{equation}
Obviously, we have $R_0(0)=R_0$.

Let the uncontrollable system be such that $R_0>1$. Theorem~\ref{thm:rho-u} allows for determining if a given constant control $u^*$ suffices to shift the value of the basic reproduction number in order to make it less than 1. However, in many cases it is difficult to compute the perturbed reproduction number $R_0(u)$. Thus, we would like to have a result that would tell us if a given structure of the treatment allows for achieving the stated goal.

The following result provides the required estimation. We first consider the case $m=1$, i.e., we assume that there is a scalar control $u> 0$.
\begin{thm}\label{thm:R1}
Let $R_0$, $x_0$ and $y_0$ be the spectral radius of $FV^{-1}$ as well as the right and the left eigenvectors of $FV^{-1}$ corresponding to $R_0$, respectively. Let, furthermore, there be only one eigenvalue of $FV^{-1}$ coinciding with $R_0$ and other eigenvalues be strictly less than $R_0$ in absolute value. For sufficiently small $u$, the sign of variation $R_0(u)-R_0$ is determined by the sign of 
\begin{equation}\label{eq:R1}R_1= - y_{0}^\top V^{-1}W x_0 R_0/(y_{0}^\top x_0)\end{equation}
\end{thm}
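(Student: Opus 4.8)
The plan is to read the defining relation $(\ref{eq:R_0u})$ (with $m=1$) as a one-parameter analytic eigenvalue perturbation problem and to differentiate it at $u=0$. Write $Q(u)=F(V+Wu)^{-1}$, so that $Q(0)=FV^{-1}$ has $R_0$ as a simple, strictly dominant eigenvalue by assumption. Since $V+Wu$ is a non-singular M-matrix for every $u\ge 0$ by the lemma preceding Theorem~\ref{thm:rho-u} (and, in any case, is invertible for $u$ near $0$ because $V$ is), the entries of $(V+Wu)^{-1}$ are rational, hence real-analytic, in $u$ near $0$; so $u\mapsto Q(u)$ is analytic there, and differentiating $(V+Wu)(V+Wu)^{-1}=I$ yields $\frac{d}{du}(V+Wu)^{-1}\big|_{u=0}=-V^{-1}WV^{-1}$, whence
\begin{equation}\label{eq:Qprime}Q'(0)=-FV^{-1}WV^{-1}.\end{equation}

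The first substantive step is to show that $R_0(u)=\rho(Q(u))$ is differentiable at $u=0$ and to identify its value there. Because $R_0$ is a simple eigenvalue of $Q(0)$, the perturbation theory of a simple eigenvalue provides a neighbourhood of $0$ and an analytic function $\lambda(u)$ with $\lambda(0)=R_0$ and $\lambda(u)$ a simple eigenvalue of $Q(u)$. As $Q(u)$ is real and $R_0$ is simple, $\lambda(u)$ must be real for $u$ small (otherwise $\overline{\lambda(u)}$ would be a second eigenvalue converging to $R_0$ as $u\to 0$); and since $R_0$ strictly dominates the remaining spectrum of $Q(0)$, continuity of the spectrum forces $\lambda(u)=\rho(Q(u))=R_0(u)$ for $u$ small. (Nonnegativity of $Q(u)=F(V+Wu)^{-1}\ge 0$ together with Perron--Frobenius is an alternative route to the same identification.) Thus $R_0(\cdot)$ is analytic near $0$, and $R_0'(0)=\lambda'(0)$.

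I would then compute $\lambda'(0)$ from the classical first-order formula for a simple eigenvalue: with $x_0,y_0$ the right and left eigenvectors of $Q(0)=FV^{-1}$ associated with $R_0$ — for which $y_0^\top x_0\ne 0$ automatically, so the quotient below is well defined —
\begin{equation*}R_0'(0)=\frac{y_0^\top\,Q'(0)\,x_0}{y_0^\top x_0}.\end{equation*}
Substituting $(\ref{eq:Qprime})$ and collapsing the sandwiched matrices by means of the eigen-relations $y_0^\top FV^{-1}=R_0\,y_0^\top$ and $FV^{-1}x_0=R_0\,x_0$ — equivalently, by running the same computation on the similar matrix $V^{-1}F$, whose right eigenvector is $V^{-1}x_0$ — reduces the expression to exactly $R_1$ as given in $(\ref{eq:R1})$. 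Finally, since $u\ge 0$, the first-order expansion $R_0(u)-R_0=u\,R_1+o(u)$ shows that, whenever $R_1\ne 0$, the sign of $R_0(u)-R_0$ coincides with that of $R_1$ for all sufficiently small $u>0$; in particular $R_1<0$ certifies that the treatment decreases the controlled reproduction number.

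The step I expect to be the crux is the identification $\rho(Q(u))=\lambda(u)$ in the second paragraph: ruling out that for arbitrarily small $u>0$ some other eigenvalue of $Q(u)$ overtakes $\lambda(u)$ in modulus, or that $\lambda(u)$ leaves the real axis. This is exactly the role of the assumption that $R_0$ is the unique eigenvalue of $FV^{-1}$ of modulus $R_0$, the rest being strictly smaller; without it $R_0(\cdot)$ is in general only locally Lipschitz, and one should speak of one-sided directional derivatives rather than of $R_0'(0)$. The remaining ingredients — the resolvent derivative $(\ref{eq:Qprime})$, the first-order eigenvalue formula, and the non-degeneracy $y_0^\top x_0\ne 0$ — are routine and can be quoted from standard references on matrix perturbation theory; the borderline case $R_1=0$, where the first-order test is silent and a second-order analysis would be required, is tacitly excluded by the phrasing of the statement.
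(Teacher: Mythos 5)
Your argument is essentially the paper's own proof: the paper likewise expands $R_0(u)=R_0+uR_1+O(u^2)$ and $x(u)=x_0+ux_1+O(u^2)$, differentiates the eigenvalue relation $F(V+uW)^{-1}x(u)=R_0(u)x(u)$ at $u=0$ using $\frac{d}{du}(V+uW)^{-1}\big|_{u=0}=-V^{-1}WV^{-1}$, and left-multiplies by $y_0^\top$ to eliminate $x_1$ --- which is exactly the derivation of the first-order simple-eigenvalue formula you quote, your extra care in showing the perturbed branch stays real and equals $\rho(Q(u))$ merely making explicit what the paper asserts by continuity. One minor remark: both your computation and the paper's proof actually give $R_1=-R_0\,y_0^\top W V^{-1}x_0/(y_0^\top x_0)$, i.e.\ the ordering $WV^{-1}$ as in (\ref{eq:R_1i}), rather than the $V^{-1}W$ printed in (\ref{eq:R1}), which is an ordering slip in the statement and not a defect of your argument.
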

\begin{proof}
The eigenvalues of $Q(u)=F\left(V+W u\right)^{-1}$ change continuously with $u$. Hence, we can write the spectral radius $R_0(u)=R_0+uR_1 +O(u^2)$ and the respective eigenvector as $x(u)=x_0+u x_1 +O(u^2)$. We thus have
\begin{equation}\label{eq:eig-pert}F(V+u W)^{-1} (x_0+u x_1 + O(u^2))=(R_0+u R_1 + O(u^2)) (x_0+u x_1 + O(u^2)).\end{equation}

First, we note that $\frac{d}{du}(V+u W)^{-1}=-(V+u W)^{-1}W(V+u W)^{-1}$ and at $u=0$ we have $\frac{d}{du}(V+u W)^{-1}\bigg|_{u=0}=-V^{-1}WV^{-1}$. Differentiating the left and the right sides of (\ref{eq:eig-pert}) and setting $u=0$ we get:
\begin{equation}\label{eq:eig-pert2}(FV^{-1} - R_0 I)x_1=(FV^{-1}WV^{-1} + R_1 I) x_0.\end{equation}
The matrix $(FV^{-1} - R_0 I)$ has a zero eigenvalue and the respective left eigenvector is $y_0$, i.e., $y_0^\top(FV^{-1} - R_0 I)=0$. Multiplying both sides of (\ref{eq:eig-pert2}) with $y_0^\top$ and expressing $R_1$ we get:
$$R_1 = - y_{0}^\top WV^{-1} x_0 R_0/(y_{0}^\top x_0).$$
Finally, we note that for sufficiently small (and positive) $u$ the sign of the difference $R_0(u)-R_0$ is determined by the sign of $R_1$, whence the result follows.
\end{proof}
That is to say, we can determine if a given treatment is efficient (at least for small values of $u$) by checking the sign of $R_1$. If this sign is negative then by increasing $u$ we decrease $R_0(u)$ and eventually ensure that it becomes less than 1. Otherwise (if $R_1>0$) we conclude that the treatment program is inadequately formulated and does not lead to a decrease in the reproduction number. See Sec.\ \ref{sec:exa} for examples.

Following the same procedure, one can write $R_0(u)$ for the case when $u\in \R^m$. We have 
$$R_0(u^*)=R_0+\sum_{k=1}^m u^*_k R^k_1 +O(\|u^*\|^2).$$
If the components of $u^*$ are sufficiently small, the contribution of each individual component is determined by the respective term $R^k_1$, which is defined as 
\begin{equation}\label{eq:R_1i}R^k_1 = - y_{0}^\top W_k V^{-1} x_0 R_0/(y_{0}^\top x_0).\end{equation}

We immediately arrive at the following result.
\begin{thm}
For sufficiently small values of the controls $u_i$, the effect of each control is independent from the values of the remaining controls. The total change of $R_0(u)$ is equal to the sum of individual contributions up to the high order term: $R_0(u)-R_0=\sum_{k=1}^m u_k R^k_1 +O(\|u\|^2)$. 
\end{thm}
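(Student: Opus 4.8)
The plan is to repeat, essentially verbatim, the argument behind Theorem~\ref{thm:R1}, but carrying a vector-valued perturbation parameter $u\in\R^m_{\ge 0}$, and then to observe that the resulting first-order term is automatically additive in the components of $u$. First I would record that, under the simple-dominant-eigenvalue hypothesis inherited from Theorem~\ref{thm:R1}, the matrix $V+\sum_{k=1}^m W_k u_k$ is invertible for all $u$ in a neighborhood of the origin --- it is an M-matrix for $u\in\R^m_{\ge 0}$ by the lemma preceding Theorem~\ref{thm:rho-u}, and invertibility is an open condition, so it also persists for small $u$ with possibly negative entries --- so that $Q(u)=F\big(V+\sum_k W_k u_k\big)^{-1}$ is a well-defined analytic matrix function of $u$ near $u=0$. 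Standard perturbation theory for a simple eigenvalue then gives that its dominant eigenvalue $R_0(u)$ and a matching eigenvector $x(u)$ depend analytically (in particular, $C^2$) on $u$ there, so the multivariate first-order Taylor formula
\[
R_0(u)=R_0+\sum_{k=1}^m u_k\,\frac{\partial R_0}{\partial u_k}\bigg|_{u=0}+O(\|u\|^2)
\]
holds with a genuine quadratic remainder, uniformly in the direction of $u$.

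Next I would evaluate each partial derivative $\partial R_0/\partial u_k|_{u=0}$ by freezing all controls except the $k$-th, i.e.\ restricting to the line $u=te_k$. Along that line $Q(te_k)=F(V+tW_k)^{-1}$ is precisely the scalar problem of Theorem~\ref{thm:R1} with $W$ replaced by $W_k$; its proof yields $R_0(te_k)=R_0+tR_1^k+O(t^2)$ with $R_1^k=-y_0^\top W_k V^{-1}x_0\,R_0/(y_0^\top x_0)$, which is exactly~\eqref{eq:R_1i}. Equivalently, one may differentiate the eigenvalue identity $F\big(V+\sum_j W_j u_j\big)^{-1}x(u)=R_0(u)x(u)$ partially in $u_k$ at $u=0$, using $\partial_{u_k}\big(V+\sum_j W_j u_j\big)^{-1}\big|_{u=0}=-V^{-1}W_kV^{-1}$, to obtain $(FV^{-1}-R_0I)x_1^k=(FV^{-1}W_kV^{-1}+R_1^kI)x_0$, and then left-multiply by $y_0^\top$, exploiting $y_0^\top(FV^{-1}-R_0I)=0$ and $y_0^\top FV^{-1}=R_0y_0^\top$, to isolate the same expression for $R_1^k$.

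Substituting these derivatives into the Taylor formula yields $R_0(u)-R_0=\sum_{k=1}^m u_kR_1^k+O(\|u\|^2)$, which is the displayed identity. The independence assertion is then just a reading of this formula: the linear-order contribution of the $k$-th control is the single term $u_kR_1^k$, and the coefficient $R_1^k$ given by~\eqref{eq:R_1i} is assembled only from $F$, $V$, $W_k$, $x_0$, $y_0$, none of which depends on the other controls $u_j$ ($j\neq k$); hence to first order the marginal effect of switching on control $k$ is the same whatever values the remaining controls take, and all coupling between controls is confined to the $O(\|u\|^2)$ term.

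The one step that requires genuine care is the first paragraph: justifying that $R_0(u)$ is differentiable in the full vector $u$ with a remainder that is $O(\|u\|^2)$ uniformly, not merely along each coordinate axis. I expect this to be the main (and fairly mild) obstacle, handled by the simple-eigenvalue analytic-perturbation argument above; everything after it is the bookkeeping already done in Theorem~\ref{thm:R1}, now carried out componentwise in $k$.
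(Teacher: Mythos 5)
Your proposal is correct and follows essentially the same route as the paper, which simply invokes ``the same procedure'' as in Theorem~\ref{thm:R1} applied componentwise to obtain the coefficients $R_1^k$ of \eqref{eq:R_1i} and then reads off the additive first-order expansion. Your only addition is to make explicit the (implicitly assumed) simple-dominant-eigenvalue hypothesis and the uniformity of the $O(\|u\|^2)$ remainder via analytic perturbation of a simple eigenvalue, which is a welcome tightening of a step the paper leaves unstated.
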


This result indicates the second potential use of $R_1$. Let there be $m$ treatment strategies that aim at decreasing the value of $R_0(u)$. Then the most efficient strategy is the one with the smallest (and necessarily negative) value of the parameter $R^k_1$. 

\paragraph{Discussion} Note that the structure of the expression for $R_1$ does resemble that one for $R_0$ and it can be interpreted as follows. First, following \cite{Driessche:02}, we interpret the $(i,j)$ entry of $V^{-1}$ as the average duration of time an individual introduced into the $j$-th compartment spends in the $i$-th one, assuming there is no reinfection and no control. The $(i,i)$ entry of $(-W_k)$ is the rate at which infected individuals are removed from the $i$-th compartment, while the $(i,j)$-th component, $i\neq j$, corresponds to the rate at which the infected from $i$-th compartment are moved to the $j$-th compartment. The respective rates are multiplied by the control $u_k$. 

Hence, the $(i,j)$-th element of $-WV^{-1}$ is the relative treatment-induced rate of flow (outflow if negative or inflow if positive) from the $i$-th compartment as applied to the individuals initially introduced into the $j$-th compartment. Note that $\sum_{i}(-WV^{-1})_{i,j}=0$ if the treatment consists in redistributing the infected individuals between infected compartments and is negative if at least a part of infected individuals are removed to the susceptible compartments as a result of treatment.

To interpret the right eigenvector of $FV^{-1}$, corresponding to $R_0$, recall that $FV^{-1}x_0$ is the expected number of new infections produced by the initial distribution of infected individuals given by $x_0$. Hence, the right eigenvector $x_0$ can be interpreted as the worst case distribution of the infected. In contrast to that, the left eigenvector $y_0$ can be seen as a worst case transmissibility rates, i.e., the transmissibility rates that result in the maximal infection spread in the population taking into account the existing structure of the transmission routes. Note that the eigenvectors are defined up to a positive factor and hence should be seen as proportions rather than absolute values.

Finally, the expression $- y_{0}^\top WV^{-1} x_0 (y_{0}^\top x_0)^{-1}$ can be interpreted as the total flow of the infection due to the  treatment computed for the worst case scenario. The term  $(y_{0}^\top x_0)^{-1}$ serves as a normalizing factor. If the total flow is negative, the treatment leads to a decrease in infection and to an increase otherwise. Note that $R_1$ is negative if $\sum_{i}(-WV^{-1})_{i,j}<0$ for all $j$.   

\begin{rem}There are two main advantages of considering $R_1$, resp., $R_1^k$ in contrast to working directly with $R_0(u)$ as defined in (\ref{eq:R_0u}):
\begin{enumerate}
\item The function $R_0(u)$ depends on $u$ in a highly nonlinear way, making it difficult if ever possible to analyze the impact of the control on the controlled reproduction number $R_0(u)$. In contrast, $R_1$ is  a well-defined quantity that does not depend on $u$ thus making it more amenable for analysis.
\item In general, the matrix $\left(V+\sum_{k=1}^m W_k u_k\right)^{-1}F$ has a complex structure which substantially complicates the problem of finding its spectral radius in an analytic form. In contrast to that, $R^k_1$ can be computed with much less effort. Also, we note that the vectors $x_0$ and $y_0$ need to be computed only once. The modeler can then vary the structure of $\Phi^u$ (and, respectively, $W_k$) to achieve the required effect in $R^k_1$.
\end{enumerate} 
\end{rem}

\begin{rem}
Note that the preceding results are formulated for sufficiently small values of $u$. However, this assumption is not as restrictive as it may seem. The point is that each component of the control $u$ denotes a fraction of the set of potentially eligible individuals that is enrolled into the respective treatment during a unit of time. In most practically relevant cases, this fraction is rather small, ranging from thousandths to hundredths, thus justifying the assumption. 
\end{rem}

\subsection{Further extensions}

The described method is substantially based upon the eigenvalue perturbation theory. Namely, we analyze the behavior of the largest real eigenvalue (spectral radius) of a next generation matrix under certain structured perturbations. All the presented results are of local nature, i.e., these are valid for sufficiently small values of $u$. However, in many cases it is possible to extend this analysis to larger values of $u$. Below, we consider several possible scenarios and consider possible options and potential limitations.

The simplest case is when the rank of the matrix $F$ is equal to 1. This corresponds to the situation when all immediately infected individuals enter the same compartment, but the infection itself can be caused by a contact with different types of infected. In this case, there is only one non-zero positive eigenvalue coinciding with the spectral radius. The only possible limitation corresponding to this situation is that the controlled reproduction number can change its behavior when the control $u$ grows sufficiently large.  One could thus wish to study the behavior of $R_0(u)$ in some more detail. To do so one can compute the second term in the series expansion which we denote by $R_2$. The corresponding expression is presented in Appendix A, Eq. (\ref{eq:r2}). However, determination of $R_2$ is a rather cumbersome procedure involving computing Moore-Penrose pseudo inverse. On the other hand, it seems that most cases the sign of $R_1$ describes the behavior of $R_0(u)$ for arbitrary large values of $u$ and hence can be used for making global predictions.

When the rank of $F$ is larger than 1, the matrix $(V+uW)^{-1}F$ has in general more than one nonzero eigenvalue. A model of co-infection and two types of treatment, one for each type of infection, is a typical example of such a situation. In this case, there are two nonzero eigenvalues $\lambda_1(u_1)$ and $\lambda_2(u_2)$ controlled by the respective treatments $u_1$ and $u_2$. Suppose that for $u_1=u_2=0$, $R_0(0)=\lambda_1(0)>\lambda_2(0)$. Hence, the optimal strategy (in terms of minimizing $R_0(u)$) would be to invest into $u_1$. However, as $\lambda_1(u_1)$ decreases, it will at some point equate with $\lambda_2(u_2)$. Further investment into $u_1$ will lead to the change of the roles: $\lambda_2(u_2)$ will become greater than $\lambda_1(u_1)$ and hence will account for $R_0(u)$.  

In general, the behavior of the eigenvalues can be predicted by analyzing the structure of $Q(u)=(V+uW)^{-1}F$. We assume here that this structure does not change with $u$, that is, $Q_{ij}(u) = 0, u> 0 \Rightarrow Q_{ij}(u) = 0, \forall u>0$ and $Q_{ij}(u)\neq 0, u> 0 \Rightarrow Q_{ij}(u)\neq 0, \forall u> 0$. In this case there are two possible situations: the matrix $Q(u)$ can be either irreducible or reducible. The former implies that there is a simple real eigenvalue corresponding to the spectral radius and hence to $R_0$ (see, e.g., \cite[Sec. 8.3]{Meyer:00}). If the matrix $Q(u)$ is reducible, one can find a permutation matrix $P$ such that $\bar{Q}(u)=PQ(u)P^{-1}$ has a block-diagonal form and the blocks on the main diagonal are irreducible. Since the eigenvalues of the permuted matrix are determined by the diagonal blocks $\bar{Q}_i(u)$ it suffices to track the spectral radii of $\bar{Q}_i(u)$; denote them by $\bar{\lambda}_i(u)$. We have that $R_0(u)=\max_i \left(\bar{\lambda}_i(u)\right)$. If for some $u^*$ it happens that $\bar{\lambda}_i(u^*)=\bar{\lambda}_j(u^*)$ we have the situation described in the preceding paragraph. 
\section{Examples}\label{sec:exa}

In this section we consider three sufficiently simple but yet non-trivial epidemiological models that are aimed at illustrating the usefulness of the proposed approach.

\subsection{An SI model with acute and chronic stages and a single treatment}\label{sec:exa1} 

\paragraph{Model}
Consider an SI-model of a disease with two stages (acute and chronic) and a treatment $u$. This general model structure with sequential infection compartments containing infected persons with different levels of contagiousness describes the evolution of such diseases as HIV \cite{WHO:05staging} and syphilis \cite{french_syphilis_2007} that have variable levels of contagiousness over the course of infection. 
For the sake of illustration we explicitly show the $\Phi^I$, $\Phi^C$, and $\Phi^u$ components in (\ref{eq:SI}).
\begin{equation}\label{eq:SI}
\frac{d}{dt}\begin{bmatrix}I_A\\I_C\\T\\S\end{bmatrix}= \underbrace{\begin{bmatrix}\alpha(X)S\\0\\0\\0\end{bmatrix}}_{\Phi^I(x)}+
\underbrace{\begin{bmatrix}-\beta I_A-\mu_A I_A\\\beta I_A+\delta  T - \mu_C I_C\\-\delta  T - \mu_T T\\w-\alpha(X)S-\mu_S S\end{bmatrix}}_{\Phi^C(x)}+ \underbrace{\begin{bmatrix}0\\-I_C\\I_C\\0\end{bmatrix}}_{\Phi^{u}(x)}u
\end{equation}
where $\alpha(X)=\dfrac{\alpha_A I_A + \alpha_C I_C + \alpha_T T}{N}$, $X=[I_A, I_C, T, S]$, and  $N=I_A+I_C+T+S$. 
The states correspond to the number of acutely infected ($I_A$), chronically infected ($I_C$), treated ($T$), and susceptible ($S$) individuals. We assume that the first three compartments are infectious with different transmission probabilities: $\alpha_A$, $\alpha_C$, and $\alpha_T$. Furthermore, $\beta$ is the inverse duration of the acute phase, $\mu_S$, $\mu_A$, $\mu_C$ and $\mu_T$ are the mortality rates for susceptible, non-treated in acute and chronic phases and treated infected, with $\mu_C>\{\mu_A,\mu_T\}>\mu_S$, and $\delta$ is the rate at which the treatment fails. 

\paragraph{Analysis}
The disease-free equilibrium is unique and given by $x^*=[0, 0,0, w/\mu_S]$.
%
The (uncontrolled) value of $R_0$ is computed as the spectral radius of $FV^{-1}$ and is equal to
$$R_0=\frac{\alpha_C \beta + \alpha_A \mu_C}{\left(\beta + \mu_A\right)\mu_C}
$$
Note that the DFE $x^*$ is unstable if $\alpha_C \beta + \alpha_A \mu_C > (\beta  + \mu_S)\mu_C$. We wish to evaluate if the proposed treatment strategy is efficient. To do so we compute $R_1$ according to (\ref{eq:R1}):

\begin{equation}\label{eq:R0u-ex1}R_1=-\frac{\beta \left(\alpha_C \mu_T - \alpha_T \mu_C\right)}{\mu_C^2 \left(\beta + \mu_A\right) \left(\delta + \mu_T\right)}
\end{equation}
%
%
This expression is negative if $\alpha_C \mu > \alpha_T \mu_C$. 
This can be written as 
\begin{equation}\label{eq:cond-ex1}\alpha_C \mu_C^{-1} > \alpha_T \mu_T^{-1},\end{equation}
that is the treatment turns out to be efficient if the product (transmission probability times average residence time in the respective compartment) is lower for the treated individuals as compared with chronically infected. In the following, we will refer to this product as the {\em cumulative transmissibility} of the infection induced by a given compartment. The condition (\ref{eq:cond-ex1}) can be interpreted as follows: If treated people live longer than untreated (but infected), but their infectivity does not reduce enough to compensate for this, such treatment will contribute to the propagation of the disease.  

 
For the considered model it is possible to compute the controlled reproduction number $R_0(u)$:
$$R_0(u)=\frac{\left(\alpha_C \beta + \alpha_A \mu_C\right) \left(\delta + \mu_T\right) + \left(\alpha_T \beta + \alpha_A \mu_T\right) u}{(\beta + \mu_A) (\delta \mu_C + \mu_C \mu_T + \mu_T u)}.
$$   
One may check that $R_1=\dfrac{\partial R_0(u)}{\partial u}\bigg|_{u=0}$. However, we are interested in the behavior of $R_0(u)$ for (relatively) large values of $u$. To do so, we compute the difference 
\begin{equation}\label{eq:r0-diff-SII}
R_0(u)-R_0=-\frac{\beta u \left(\alpha_C \mu_T - \alpha_T \mu_C\right)}{\mu_C \left(\beta + \mu_A\right) \left(\delta \mu_C + \mu_C \mu_T + \mu_T u\right)}.
\end{equation}
One can readily observe that the difference is negative for any $u>0$ when (\ref{eq:cond-ex1}) holds.

Figure \ref{fig:ex2} shows the relative error in the $R_1$ approximation to $R_0(u)$ for a set of random parameters as a function of $R_0$.
We obtained random parameter sets for a given value of $R_0$ optimizing from a random starting point for $\alpha_A$, $\alpha_C$, and $\beta$ assuming $\mu_A = \mu_C = 120^{-1}$, $\mu_T = 360^{-1}$, and $\alpha_T=0$.
We rejected parameter sets such that 
there was no feasible control capable of reducing $R_0(u)$ to 1.
The relative error is defined as the difference in the exact value of $u^*$ such that $R_0(u^*)=1$ and $u^{**}=\frac{R_0-1}{R_1}$ normalized by $u^*$. 
The approximate value of the control is always less than the true value and the approximation becomes worse as the value of $R_0$ increases, although the accuracy of the approximation is dependent on the specific parameter values. 
However, if $R_0$ is small the approximation gives a reasonable indicator of the level of control required to bring the epidemic to the threshold level.

\begin{figure}[h]
\centering
\includegraphics[width=10cm]{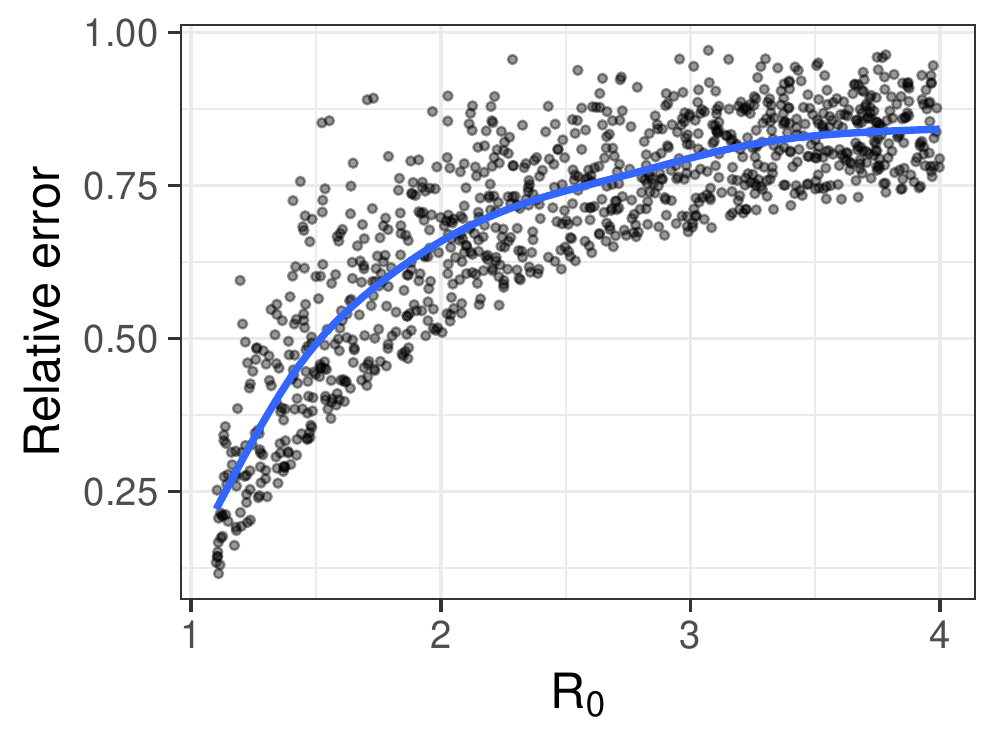}
\caption{Relative error in approximate reduction in $R_0$ due to treatment. This figure shows the multiplicative-scale error defined as $\frac{u^*-u^{**}}{u^*}$ where $u^{*}$ such that $R_0(u^*)=1$ and $u^{**}=\frac{1-R_0}{R_1}$ for a set of randomly selected parameter values. Parameters with a given $R_0$ between 1.1 and 4 were selected at random as described in the text. The blue line is a local polynomial regression showing the general trend. 
}
\label{fig:ex2}
\end{figure}

\subsection{An SEIR model with an asymptomatic stage and treatment}\label{sec:exa2}

\paragraph{Model}
Consider the model shown in Fig.\ \ref{fig:SEIR}. This model describes the transmission dynamics of an asymptomatic (sub-clinical) infection. Individuals with asymptomatic
infection do not develop the respective symptoms, but are infectious and contribute to the distribution of the disease.
 Asymptomatic infection has been shown to exist for many diseases, including, e.g., herpes \cite{Wald:95}, gonorrhea \cite{Korenromp:02}, measles \cite{Anlar:02}, and common cold.

\begin{figure}[tbh]
  \centering
    \includegraphics[width=0.75\textwidth]{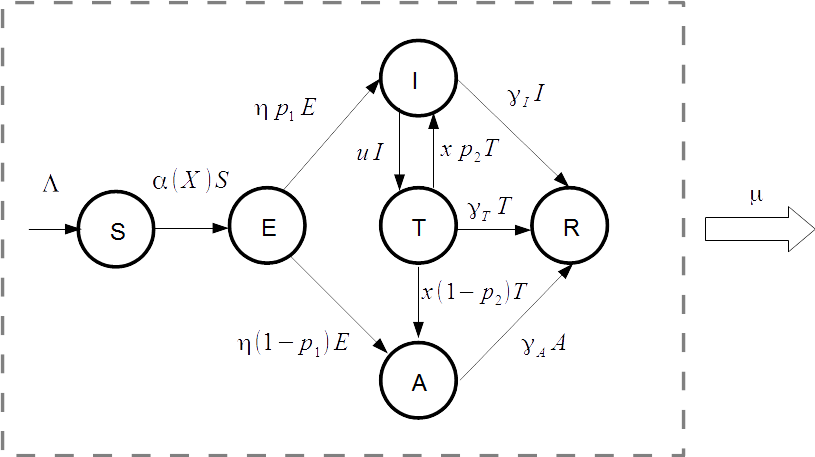}
     \caption{SEIR model.
}
      \label{fig:SEIR}
\end{figure}

Here, we have the following compartments: $E$ -- exposed, $I$ -- infected, $A$ -- asymptomatically infected, $T$ -- treated. $\Lambda$ is the inflow into the population, $\eta$ is the transmission rate, $p_1$ is the probability that the exposition will lead to the typical clinical course and $p_2$ is the probability that the failed treatment will result in the typical clinical course. Furthermore, $x$ is the rate at which the treatment is failed or canceled, and $\gamma_I$, $\gamma_A$ and $\gamma_T$ are the inverses of the mean residence time in respective compartments. We assume that only the symptomatic infected, i.e., $I$, are treated. However, the disease can be transmitted both by symptomatic and asymptomatic infected though with different transmission rates $\alpha_I$ and $\alpha_A$. People under treatment are assumed to be non-infectious.

Its behavior is described by the following DEs:
$$\begin{cases}
\dot{E}=\alpha(X) S - (\eta + \mu)E \\
\dot{A}= - (\gamma_A + \mu) A + \eta (1 - p_1) E + x (1 - p_2) T\\
\dot{I}= \eta p_1 E - \left(\gamma_I + \mu\right) I - u I + p_2 x T\\
\dot{T}= u I - (\gamma_T + \mu  + x) T\\
\dot{S} =\Lambda - \mu S - \alpha(X) S\\
\dot{R}= \gamma_A A + \gamma_I I + \gamma_T T - \mu R, \end{cases}
$$
where $\alpha(X)=\dfrac{\alpha_A A + \alpha_I I}{N}$.

\paragraph{Analysis}
The basic reproduction number $R_0$ is found using the NGM approach to be
$$R_0=\frac{\eta}{\eta + \mu}\left[
\frac{\alpha_A}{\gamma_A + \mu}(1 - p_1)+\frac{\alpha_I}{\gamma_I + \mu}p_1\right],
$$
where the expression in square brackets is the expected value of the total cumulative transmissibility of the infection and the quantity in front of the brackets is the fraction of the infected that leave $E$ toward either $I$ or $A$.

We compute $R_1$
\begin{equation}\label{eq:r1-seiar}R_1=-\dfrac{\eta p_1 \big[\alpha_I (\gamma_A + \mu)(\gamma_T + \mu) + 
\left(\alpha_I (\gamma_A + \mu) - \alpha_A (\gamma_I  + \mu)\right)(1 - p_2) x\big]}{\left(\eta + \mu\right) \left(\gamma_{A} + \mu\right) {\left(\gamma_I + \mu\right)}^2 \left(\gamma_{T} + \mu + x\right)}\end{equation}
which turns out to be negative if the numerator is positive (note that the denominator is always positive). 


Upon some algebraic manipulations, this condition can be written as
\begin{equation}\label{eq:r1-seiar-cond}\frac{\alpha_I}{\gamma_I + \mu}> \frac{x (1 - p_2)}{\gamma_T + \mu} \left[\frac{\alpha_A}{\gamma_A + \mu} - \frac{\alpha_I}{\gamma_I + \mu}\right].\end{equation}
Define the cumulative transmissibilities of the $A$ and $I$ stages as $\zeta_A=\frac{\alpha_A}{\gamma_A + \mu}$ and $\zeta_I=\frac{\alpha_I}{\gamma_I + \mu}$. With this, the condition (\ref{eq:r1-seiar-cond}) can be rewritten as 
$$\frac{\zeta_A - \zeta_I}{\zeta_A} \cdot x(1 - p_2) < \gamma_T + \mu,$$
provided $\zeta_A > \zeta_I$ (otherwise the condition (\ref{eq:r1-seiar-cond}) holds trivially as the r.h.s.\ of (\ref{eq:r1-seiar-cond}) turns to be negative).  The first term on the left side is the relative difference in cumulative transmissibility and the second one is the flow from $T$ to $A$. Finally, the expression on the right side describes the flow from $T$ to $R$ or to outside the system. This result underscores the importance of the treatment efficiency for the successful eradication of the disease.



We proceed by considering the total change of $R_0$ due to the control $u$
\begin{multline}\label{eq:r0-diff}R_0(u)-R_0=\\-\dfrac{\eta p_1 u \left[\alpha_I (\gamma_A + \mu) (\gamma_T + \mu) + \left(\alpha_I (\gamma_A + \mu) - \alpha_A (\gamma_I + \mu)\right) (1 - p_2) x\right]}{(\eta + \mu) (\gamma_A + \mu) (\gamma_I + \mu) \left[(\gamma_I + \mu)(\gamma_T + \mu + x) +  (\gamma_T + \mu + x (1 - p_2)) u\right]}.
\end{multline}
The denominator of (\ref{eq:r0-diff}) is always positive so, the condition for the treatment to be efficient (that is the condition for the difference to be negative) is that the numerator is positive. But the numerator of (\ref{eq:r0-diff}) coincides with the numerator of $R_1$, (\ref{eq:r1-seiar}), up to the control $u$ which is positive. Hence we conclude that satisfaction of the condition (\ref{eq:r1-seiar-cond}) guarantees decrease in $R_0$ for any positive value of $u$, i.e., gives a global condition.

\subsection{An SI model with high- and low-risk groups and two treatments}\label{sec:exa3}

\paragraph{Model}
Consider a simplified model of HIV transmission dynamics with two controls corresponding to treating infected individuals from the high- and the low-risk groups, denoted by $u_T$ and $u_L$. For the details on the derivation of the model see \cite{HK:15}. The transmission dynamics is described by the following set of ODEs

\begin{equation}\label{eq:SI_HLm}\begin{cases}
\dot{I}_H= - I_H \mu - I_H f_L \omega + I_L f_H \omega + \phi_H(X) S_H - I_H u_H - \mu_I I_H\\ 
\dot{I}_L= - I_L \mu + I_H f_L \omega - I_L f_H \omega + \phi_L(X) S_L - I_L u_L - \mu_I I_L\\ 
\dot{T}_H=T_L f_H \omega - T_H \mu - T_H f_L \omega  +  I_H u_H\\ 
\dot{T}_L=T_H f_L \omega - T_L \mu  - T_L f_H \omega +   I_L u_L\\
\dot{S}_H=\alpha_H - S_H \mu - S_H f_L \omega + S_L f_H \omega - \phi_H(X) S_H\\ 
\dot{S}_L=\alpha_L - S_L \mu + S_H f_L \omega - S_L f_H \omega - \phi_L(X) S_L, \end{cases}
\end{equation}
where $S_{\{H,L\}}$, $I_{\{H,L\}}$, and $T_{\{H,L\}}$ are the susceptible, infected, and treated. The subscript denotes the behavioral pattern of the respective group: there are a (H)igh and a (L)ow risk group. Further, $\phi_H(X)=\beta\lambda_H \dfrac{\lambda_H I_H + \lambda_L I_L}{\lambda_H N_H + \lambda_L N_L}$ and  $\phi_L(X)=\beta\lambda_L \dfrac{\lambda_H I_H + \lambda_L I_L}{\lambda_H N_H + \lambda_L N_L}$ are the per-capita transmission rates. $\mu$ is the mortality rate and $\mu_I$ is the disease-induced mortality; $\rho_H=f_H\omega$ and $\rho_L=(1-f_H)\omega=f_L\omega$ are the transition rates between high- and low-risk groups with $\omega$ denoting the volatility coefficient; $\lambda_H$ and $\lambda_L$ are the contact rates; $\beta$ is the infection transmissibility. Finally, the inflow rates are $\alpha_H=f_H\mu N$ and $\alpha_L=f_L\mu N$.

For the considered model, the disease free equilibrium is $[I_H^*,\,I_L^*,\,T_H^*,\,T_L^*,\,S_H^*,\,S_L^*]=[0,\,0,\,0,\,0,\,f_H N,\,f_L N]$. Note that $f_H$ and $f_L$ are the fractions of the respective (high- or low-risk) population at the DFE. Using the next generation matrix method we can compute $R_0$:

\paragraph{Analysis}
The basic reproduction number is computed using the NGM method:
$$R_0=\frac{\beta}{\left(\mu_I + \mu + \omega\right)}C_V(\lambda) + \frac{\beta}{\left(\mu_I + \mu\right)}[\lambda]
$$
where $[\lambda]=f_L\lambda_L + f_H \lambda_H$ and $[\lambda^2]=\lambda_L^2 f_L + \lambda_H^2 f_H$ are the first and the second moments of the contact rate at the DFE, $\var(\lambda)$ is the variance of the contact rate, and $C_V$ is the coefficient of variance defined as $C_V(\lambda)=\var(\lambda)\bigm/ [\lambda]$.
Since there are two controls, we would like to compare their contributions in order to decide which one should be invested into. To compute the corresponding components $R_1^H$ and $R_1^L$ we use (\ref{eq:R_1i}) to get

$$R_1^H=-\frac{\beta f_H {\left(\lambda_H(\mu_I + \mu) + [\lambda] \omega \right)}^2}{\left(\mu_I + \mu\right)^2 [\lambda] {\left(\mu_I + \mu + \omega\right)}^2}
$$

$$R_1^L=-\frac{\beta f_L \left(\lambda_L (\mu_I + \mu) + [\lambda] \omega\right)^2}{{\left(\mu_I + \mu\right)}^2 [\lambda] {\left(\mu_I + \mu + \omega\right)}^2}
$$
The first observation is that both $R_1^H$ and $R_1^L$ are negative thus, they contribute to reducing $R_0(u)$ for any choice of parameters. After some algebraic manipulations we find that $u_H$ is more efficient than $u_L$ if

$$f_H {\left(\frac{\lambda_H}{[\lambda]} + \frac{\omega}{\mu}\right)}^2 > f_L\left(\frac{\lambda_L}{[\lambda]} + \frac{\omega}{\mu}\right)^2$$
Figure \ref{fig:ex1} shows the value of $\lambda_H$ such that the two terms in the inequality above are equal. To account for the variability in the duration of the high and low-risk periods as a function of $f_H$, we introduce the normalized volatility coefficient 

$$\omega^*=\frac{\omega\mu^{-1}}{{f_H}^{-1}+(1-f_H)^{-1}},$$
which is the number of full high and low-risk episodes that can be contained in a typical infectious period.
This plot shows that for fixed value of $f_H$, behavioral volatility makes high-risk intervention more plausible (i.e.~the high-risk population does not have to be extremely high-risk to make a targeted intervention efficient). This simple analysis gives us a clear theoretical prescription for when to focus on high-risk group based on measurable aspects of the transmission system. 

\begin{figure}[h]
\centering
\includegraphics[width=10cm]{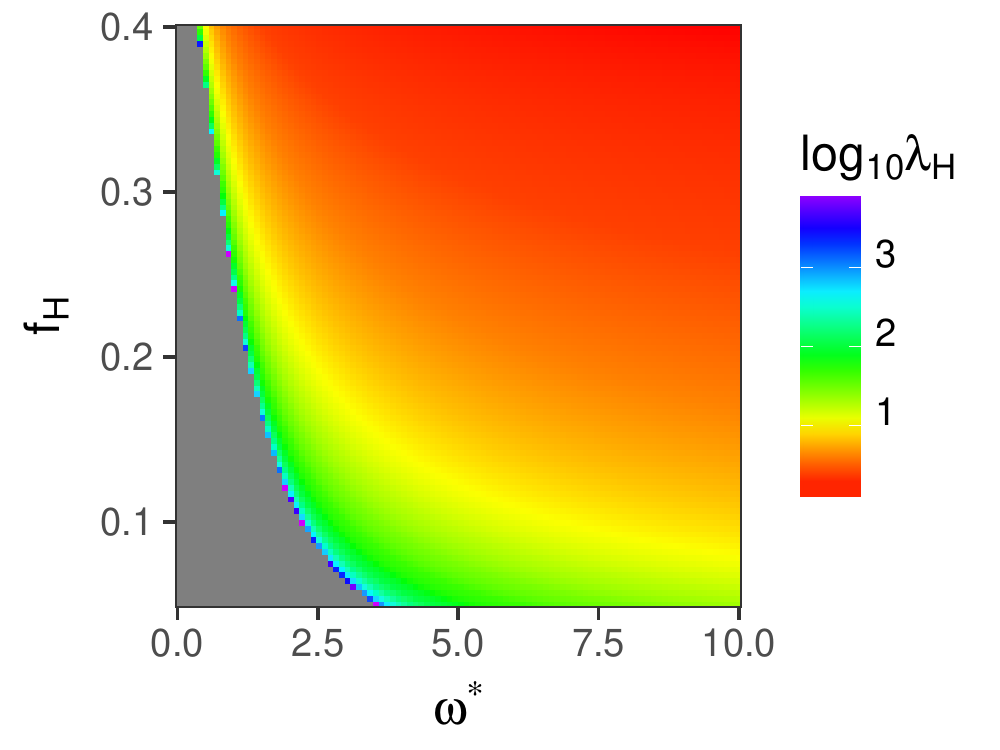}
\caption{Minimum value of the high-risk contact rate, $\lambda_H$, for which the high-risk intervention is preferred. The color shows the value of $\lambda_H$ for which the terms in the above inequality are equal. For any value of $\lambda_H$ higher than the plotted value the high-risk intervention is preferred. The gray color indicates that the value is either above $10^4$ or there is no value such that the terms are equal. The x-axis is the normalized volatility coefficient $\omega^*$. The y-axis shows the fraction of the population that is high-risk in the absence of disease. The remaining parameters are $\lambda_L=1$ and $\mu_I=0$.
}
\label{fig:ex1}
\end{figure}

\section{Discussion}


The results presented in this paper can be applied to a wide class of epidemiological models as long as their dynamics can be described by a compartmental system of form \eqref{eq:compart-u}. Our approach builds upon and further develops the next generation matrix method in that it allows one to estimate the influence of the treatment-type control(s) on the basic reproduction number $R_0$ which defines the ultimate condition for eventual elimination of a disease. The more a given control reduces $R_0$, the closer the system is to elimination and the more effective future interventions will be. Furthermore, it may turn out that in complex models an intervention could unintentionally make things worse for certain populations. The basic premise of medicine to do no harm applies to public health as well. However, the complex, non-linear dynamics of transmission limit the ability of our intuitions to predict the effects of an intervention. Likewise, measurement of the effects of interventions are often very noisy and can have long time lags. Both weak measurability of outcomes and hard to predict dynamics highlight the need for stronger theoretical guarantees that an intervention will not cause population-level harm. A possible extension to this work could include consideration of complex models of how risk behavior changes in response to changing prevalence and incidence of disease.

All our results are of local nature, i.e., these are valid for sufficiently small values of $u$. However, in many cases it is possible to extend this analysis to larger values of $u$ as was illustrated by examples  
It should also be noted that the proposed approach does inherit all the limitations associated with the NGM method. For instance, it provides only a local stability condition and does not allow to make a conclusion about the behavior of the system under large deviations. 

\section*{Acknowledgment}
Research presented in this article was supported by the Laboratory Directed Research and Development program of Los Alamos National Laboratory under project number 20180612ECR.

\section*{Appendix A. Proofs and computations}\label{sec:appA}
\paragraph{Proof of Lemma \ref{lem:linearize-u}} One can readily observe that $\dfrac{\partial \Phi^{u}_{i,k}(x)}{\partial x_j}=0$ for all $j>l$ as $\Phi^{u}_{i,k}(x^*)$ turns to zero identically for any $x^*=\begin{bmatrix}\mathbf{0}_l& x^C\end{bmatrix}^\top$. We thus consider the partial derivatives of $\Phi^{u}_{i,k}(x^*)$ w.r.t. $x_j$ for $j=1,\dots,l$. 

Let $i,j\in\{1,\dots,l\}$. We have
\begin{align*}\dfrac{\partial \Phi^{u}_{i,k}(x)}{\partial x_j}
=&{ }\lim_{\delta x_j\rightarrow 0}\dfrac{\sum\limits_{\substack{q\le l\\q\neq i}}a^u_{iq,k}(0,\dots,\delta x_j,\dots,0)-\sum\limits_{q\neq i}a^u_{qi,k}(0,\dots,\delta x_j,\dots,0)}{\delta x_j}\\[10pt]
=&{ }\begin{cases}\lim\limits_{\delta x_j\rightarrow 0}\dfrac{-\sum\limits_{q\neq i}a^u_{qi,k}(0,\dots,\delta x_j,\dots,0)}{\delta x_j}\le 0,&i=j\\[10pt]
\lim\limits_{\delta x_j\rightarrow 0}\dfrac{a^u_{ij,k}(0,\dots,\delta x_j,\dots,0)}{\delta x_j}\ge 0&i\neq j \end{cases}\end{align*}
For $i\in\{l+1,\dots,n\}$ and $j\in\{1,\dots,l\}$, the partial derivatives are
\begin{align*}\dfrac{\partial \Phi^{u}_{i,k}(x)}{\partial x_j}
=&{ }\lim_{\delta x_j\rightarrow 0}\dfrac{a^u_{ij,k}(0,\dots,\delta x_j,\dots,0)}{\delta x_j}\ge 0,
\end{align*}
which yields the required sign structure. Summation over $j$ gives $\sum_{j=1}^n \dfrac{\partial \Phi^{u}_{i,k}(x)}{\partial x_j} =0$ for all $i=1,\dots,l$ and hence, we have (note that the summation is performed only for $j\le l$) \begin{equation}\label{eq:diag-dom}\left|\dfrac{\partial \Phi^{u}_{i,k}(x)}{\partial x_j}\right|\ge \sum\limits_{\substack{j\le l\\j\neq i}} \left|\dfrac{\partial \Phi^{u}_{i,k}(x)}{\partial x_j}\right|.\end{equation}
This implies that the matrix $B_{11,k}=\left[\dfrac{\partial \Phi^{u}_{i,k}(x)}{\partial x_j}\right]_{\substack{i=1,\dots,l\\j=1,\dots,l}}$ is weakly column diagonally dominant as the inequality in (\ref{eq:diag-dom}) is not strict. \hfill$\Box$
\paragraph{Computation of the second order term in the expansion of $R_0(u)$}
Let $A(u)$ be a matrix depending on $u$, $r_0$ be the simple eigenvalue equal to the spectral radius of $A(0)$, and $w_0$ and $v_0$ be the left and the right eigenvectors corresponding to $r_0$.
We expand the perturbed eigenvalue $r(u)$ and the corresponding right eigenvector $x(u)$ in a Taylor series and keep the terms up to the second order: $r(u)=r_0+r_1 u + r_2 u^2 + O(u^3)$ and $x(u)=v_0+v_1 u + v_2 u^2 + O(u^3)$. Thus we have
\begin{equation}\label{eq:Au-1}A(u)(v_0+v_1 u + v_2 u^2 + O(u^3))=(v_0+v_1 u + v_2 u^2 + O(u^3))(r_0+r_1 u + r_2 u^2 + O(u^3))\end{equation}
Differentiating (\ref{eq:Au-1}) w.r.t. $u$ and evaluating at $u=0$ we get
\begin{equation}\label{eq:Au-2}(I r_0-A(0))v_1=(A'(0) - I r_1)v_0,\end{equation}
whence the expression for $r_1$ can be obtained: $r_1 =w_0^\top A'(0)v_0(w_0^\top v_0)^{-1}$ (cf.\ the proof of Thm.\ \ref{thm:R1}). Substituting $r_1$ back to (\ref{eq:Au-2}) one gets an expression that can be used to determine $v_1$ (see \cite[Chap. 8]{Magnus:99} for details):
\begin{equation}\label{eq:Au-3}v_1=(I r_0-A(0))^\dagger\left(I - \frac{v_0 w_0^\top}{w_0^\top v_0}\right) A'(0)v_0,\end{equation}
where $()^\dagger$ is the Moor-Penrose inverse operator.  

To compute the second term in the expansion of $r(u)$ we differentiate (\ref{eq:Au-1}) twice w.r.t. $u$ and evaluate at $u=0$ to get
$$\frac{1}{2}A''(0)v_0 +  A'(0) v_1 - v_1 r_1 - v_0 r_2=   (I r_0 - A(0)) v_2 $$
Multiplying from the left by $w_0^\top$ and substituting the previously obtained expressions for $r_1$ and $v_1$ we arrive after some computations to the final expression for $r_2$:
\begin{equation}\label{eq:r2}r_2=\left(w_0^\top v_0\right)^{-1}w_0^\top \left[\frac{1}{2} A''(0) + A'(0)P_0(I r_0-A(0))^\dagger P_0 A'(0)\right]v_0,\end{equation}
where $P_0=I - \frac{v_0 w_0^\top}{w_0^\top v_0}$ is the oblique projection operator. 
Finally, we recall that $A(u)=(V+uW)^{-1}F$, whence $A'(0)=V^{-1}WV^{-1}F$ and $A''(0)=-2V^{-2}WV^{-2}F$.
\section*{Appendix B. Special classes of matrices and their properties.}\label{sec:appB}

\begin{defn}A matrix $B\in \R^{n\times n}$ is {\em non-negative}, denoted by $B\succeq 0$, if $b_{ij}\ge 0$ for all $i,j=1,\dots,n$. 
\end{defn}
\begin{defn}
A non-singular matrix $A$ is said to be {\em inverse positive} if it satisfies $\R^n_{\ge 0}\subseteq A\R^n_{\ge 0}$, which is equivalent to $A^{-1}\succeq 0$, i.e., $A^{-1}\R^n_{\ge 0}\subseteq \R^n_{\ge 0}$. 
\end{defn}
\begin{thm}\label{thm:inv-pos}
Let $A=\alpha I - B$, where $\alpha >0$ and $B\succeq 0$. Then the following statements are equivalent:
\begin{enumerate}
\item The matrix $A$ is inverse positive,
\item The spectral radius of $B$ is strictly smaller than $\alpha$,
\item The matrix $A$ is positive stable, i.e., if $\lambda$ is an eigenvalue of $A$, then $\Re(\lambda)>0$.
\end{enumerate}
\end{thm}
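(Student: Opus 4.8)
The plan is to establish the cycle $(2)\Rightarrow(1)\Rightarrow(2)$ together with the equivalence $(2)\Leftrightarrow(3)$, with the Perron--Frobenius theorem for nonnegative matrices as the only non-elementary ingredient. Concretely, the single external fact I will use is that, since $B\succeq 0$, its spectral radius $\rho(B)$ is itself an eigenvalue of $B$ and admits a nonnegative eigenvector $v\succeq 0$, $v\neq 0$ (the part of Perron--Frobenius that holds without any irreducibility assumption; see, e.g., \cite{Horn:91}).

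First I would record the trivial spectral mapping $\sigma(A)=\{\alpha-\mu:\mu\in\sigma(B)\}$ coming from $A=\alpha I-B$. For $(2)\Leftrightarrow(3)$: if $\rho(B)<\alpha$, then every eigenvalue $\mu$ of $B$ obeys $\Re(\mu)\le|\mu|\le\rho(B)<\alpha$, so $\Re(\alpha-\mu)>0$ and $A$ is positive stable; conversely, if $A$ is positive stable, then in particular the real eigenvalue $\alpha-\rho(B)$ of $A$ has positive real part, which forces $\rho(B)<\alpha$. Next, $(2)\Rightarrow(1)$: from $\rho(B)<\alpha$ we get $\rho(B/\alpha)<1$, so the Neumann series $\sum_{k\ge 0}(B/\alpha)^k$ converges to $(I-B/\alpha)^{-1}$; hence $A$ is nonsingular with $A^{-1}=\alpha^{-1}\sum_{k\ge 0}(B/\alpha)^k$, and since $B\succeq 0$ every term of this series is a nonnegative matrix, so $A^{-1}\succeq 0$, i.e.\ $A$ is inverse positive.

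Finally, $(1)\Rightarrow(2)$: inverse positivity already includes nonsingularity of $A$, so $\alpha-\rho(B)\neq 0$. Taking the Perron--Frobenius eigenvector $v\succeq 0$, $v\neq 0$ with $Bv=\rho(B)v$, we get $Av=(\alpha-\rho(B))v$ and hence $v=(\alpha-\rho(B))A^{-1}v$. Since $A^{-1}\succeq 0$ and $v\succeq 0$, the vector $A^{-1}v$ is nonnegative; were $\alpha-\rho(B)<0$, this equation would force $v\preceq 0$, hence $v=0$, a contradiction. Therefore $\alpha-\rho(B)>0$, which is exactly $(2)$.

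The remaining work (convergence of the Neumann series, the componentwise sign bookkeeping in the last step, the elementary inequality $|\mu|\le\rho(B)$) is routine. The only genuine content, and the step I expect to be the main obstacle if one insists on a fully self-contained treatment, is the appeal to the Perron--Frobenius conclusion that $\rho(B)$ is attained as an eigenvalue with a nonnegative eigenvector: proving that from scratch for a possibly reducible $B\succeq 0$ requires a separate compactness or fixed-point argument, so here I would simply cite it.
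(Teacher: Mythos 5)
Your proof is correct. Note that the paper itself does not prove Theorem~\ref{thm:inv-pos}: it is quoted in Appendix~B as a standard fact from the theory of $M$-matrices (the paper defers to \cite{BP:94}, proving only item~4 of Theorem~\ref{thm:M-charact}). Your argument is the standard textbook one --- the Neumann series $A^{-1}=\alpha^{-1}\sum_{k\ge 0}(B/\alpha)^k$ for $(2)\Rightarrow(1)$, the spectral shift $\sigma(A)=\{\alpha-\mu:\mu\in\sigma(B)\}$ for $(2)\Leftrightarrow(3)$, and the Perron--Frobenius eigenvector for $(1)\Rightarrow(2)$ --- and all steps, including the handling of the possibly reducible case via the weak form of Perron--Frobenius and the exclusion of $\alpha=\rho(B)$ via nonsingularity, are sound.
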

\begin{defn}
A matrix $A=\alpha I - B$ satisfying any of the properties of Thm.\ \ref{thm:inv-pos} is said to be an {\em $M$-matrix}.
\end{defn}
We will occasionally write $\mathbf{M}$ to denote the class of all $n$-by-$n$ non-singular $M$-matrices. The preceding results can be generalized in the following way. Let us define the class of {\em $Z$-matrices} as $\mathbf{Z}=\{A\in \R^{n\times n}|a_{ij}\le 0, i\neq j\}$. The following theorem gives a number of conditions which guarantee that a given Z-matrix is a non-singular M-matrix.  For a complete list see \cite{BP:94}.
\begin{thm}\label{thm:M-charact}
Let $A\in \mathbf{Z}$. Any of the following conditions implies $A\in\mathbf{M}$.
\begin{enumerate}
\item $A$ is inverse-positive.
\item $A$ is positive stable.
\item $A$ has all positive diagonal elements and is {\em strictly row diagonally dominant (d.d.)}, i.e., $$a_{ii}> \sum_{i\neq j}|a_{ij}|,\quad i=1,\dots,n.$$
\item $A$ has all positive diagonal elements and is {\em strictly column d.d.}, i.e., $$a_{ii}> \sum_{i\neq j}|a_{ji}|,\quad i=1,\dots,n.$$
\end{enumerate}
\end{thm}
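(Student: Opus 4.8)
The plan is to route every one of the four conditions back to Theorem~\ref{thm:inv-pos}, which already characterises non-singular M-matrices among those of the form $\alpha I - B$ with $\alpha>0$ and $B\succeq 0$. The first task is therefore to exhibit such a representation for an arbitrary $A\in\mathbf{Z}$. First I would pick any $\alpha$ with $\alpha>\max_i a_{ii}$ and $\alpha>0$, and set $B=\alpha I - A$. Because $A$ is a Z-matrix, its off-diagonal entries are non-positive, so the off-diagonal entries of $B$ are non-negative; and by the choice of $\alpha$ the diagonal entries $\alpha - a_{ii}$ are positive. Hence $B\succeq 0$ and $A=\alpha I - B$ lies in the scope of Theorem~\ref{thm:inv-pos}.

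With this representation in hand, conditions~1 and~2 are immediate. Condition~1 (inverse-positive) is exactly property~1 of Theorem~\ref{thm:inv-pos}, and condition~2 (positive stable) is exactly property~3; since that theorem states these properties are equivalent and define a non-singular M-matrix, each of conditions~1 and~2 gives $A\in\mathbf{M}$ with no further work.

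For condition~3 I would show that strict row diagonal dominance together with positive diagonal entries forces positive stability, after which the previous step closes the argument. The tool is the Gershgorin circle theorem: every eigenvalue $\lambda$ of $A$ lies in some disk $\{z:\,|z-a_{ii}|\le \sum_{j\neq i}|a_{ij}|\}$. Since $a_{ii}>0$ and $a_{ii}>\sum_{j\neq i}|a_{ij}|$, the leftmost point of that disk has real part $a_{ii}-\sum_{j\neq i}|a_{ij}|>0$, so the whole disk sits in the open right half-plane. Every eigenvalue therefore satisfies $\Re(\lambda)>0$, i.e.\ $A$ is positive stable, which is condition~2. For condition~4 I would transpose: $A^\top$ is again a Z-matrix (the off-diagonal sign pattern is preserved under transposition), and strict column dominance of $A$ is precisely strict row dominance of $A^\top$; by the argument just given $A^\top$ is positive stable, and since $A$ and $A^\top$ share the same spectrum, $A$ is positive stable as well.

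I expect the only genuinely substantive step to be the Gershgorin confinement of the spectrum to the open right half-plane under strict diagonal dominance; everything else is bookkeeping that reduces each case to Theorem~\ref{thm:inv-pos}. The one point to handle with care is verifying that the representation $A=\alpha I - B$ with $B\succeq 0$ is available for every Z-matrix, so that Theorem~\ref{thm:inv-pos} genuinely applies---this is guaranteed as soon as $\alpha$ is taken larger than every diagonal entry.
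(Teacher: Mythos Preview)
Your argument is correct. For item~4 your transposition argument is essentially identical to the paper's own proof: observe that $A^\top\in\mathbf{Z}$, that strict column dominance of $A$ is strict row dominance of $A^\top$, invoke item~3 to get $A^\top\in\mathbf{M}$, and then transfer positive stability back via the shared spectrum.

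The difference is in items~1--3. The paper does not prove these at all; it simply cites Berman and Plemmons~\cite{BP:94}. You instead give a self-contained treatment: you first exhibit the representation $A=\alpha I - B$ with $B\succeq 0$ (choosing $\alpha>\max_i a_{ii}$ and $\alpha>0$), which puts $A$ into the scope of Theorem~\ref{thm:inv-pos} and makes items~1 and~2 immediate, and you handle item~3 by the Gershgorin disk argument to force positive stability. This is a genuine, if modest, improvement in self-containment over the paper, and the Gershgorin step is the natural elementary route to item~3. Nothing in your proposal is wrong or incomplete; the representation step is handled carefully, and the reduction of each case to Theorem~\ref{thm:inv-pos} is clean.
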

\begin{proof}
We will prove only the last item as the remaining ones are covered in \cite{BP:94}. 

Let $A\in \mathbf{Z}$ and $A$ be strictly column d.d., then $A^\top\in\mathbf{Z}$ and is strictly row d.d. This implies $A^\top\in \mathbf{M}$. Since the spectrum of $A$ coincides with that of $A^\top$, the positive stability property holds for $A$ and hence $A\in \mathbf{M}$.
\end{proof}


\end{document}